\documentclass[10pt,twocolumn,journal]{IEEEtran}
\usepackage{cite}
\usepackage{pslatex}
\usepackage{moreverb}
\usepackage{epsfig}
\usepackage{amsmath}
\usepackage{booktabs}
\usepackage{array}
\usepackage{multirow}
\usepackage{threeparttable}
\usepackage{cite}
\usepackage{amsfonts}
\usepackage{dsfont,color}
\usepackage{amssymb}

\newtheorem{Theo}{Theorem}
\newtheorem{lemma}{Lemma}

\newtheorem{coro}{Corollary}
\newtheorem{proposition}{Proposition}
\newtheorem{Def}{Definition}

\DeclareMathAlphabet{\mathpzc}{OT1}{pzc}{m}{it}

\def\calW{{\mathcal W}}

\def\bg{{\mathbf g}}
\def\bh{{\mathbf h}}

\def\bn{{\mathbf n}}

\def\br{{\mathbf r}}

\def\bx{{\mathbf x}}
\def\by{{\mathbf y}}
\def\bz{{\mathbf z}}
\def\bA{{\mathbf A}}
\def\bB{{\mathbf B}}

\def\bD{{\mathbf D}}

\def\bG{{\mathbf G}}
\def\bH{{\mathbf H}}
\def\bI{{\mathbf I}}

\def\bS{{\mathbf S}}

\def\bU{{\mathbf U}}

\def\diag{{\rm diag}}
\def\tr{{\rm tr}}

\def\E{{\mathds E}}

\def\b0{{\mathbf 0}}

\begin{document}

\title{On Secrecy Capacity of Fast Fading MIMOME Wiretap Channels With Statistical CSIT}

\author{
\authorblockN{Shih-Chun Lin and Cheng-Liang Lin}
\thanks{S.-C. Lin and C.-L. Lin are with
the Department of Electronic and Computer Engineering, National
Taiwan University of Science and Technology, Taipei, Taiwan,
10607. E-mail:sclin@mail.ntust.edu.tw. This work was supported by
the National Science Council, Taiwan, R.O.C., under grant NSC
101-2221-E-027-085-MY3}  }

\IEEEoverridecommandlockouts
\maketitle
\begin{abstract}
In this paper, we consider secure transmissions in ergodic
Rayleigh fast-faded multiple-input multiple-output
multiple-antenna-eavesdropper (MIMOME) wiretap channels with only
statistical channel state information at the transmitter (CSIT).
When the legitimate receiver has more (or equal) antennas than the
eavesdropper, we prove the first MIMOME secrecy capacity with
partial CSIT by establishing a new secrecy capacity upper-bound.
The key step is to form an MIMOME degraded channel by dividing the
legitimate receiver's channel matrix into two submatrices, and
setting one of the submatrices to be the same as the
eavesdropper's channel matrix. Next, under the total power
constraint over all transmit antennas, we analytically solve the
channel-input covariance matrix optimization problem to fully
characterize the MIMOME secrecy capacity. Typically, the MIMOME
optimization problems are non-concave. However, thank to the
proposed degraded channel, we can transform the stochastic MIMOME
optimization problem to be a Schur-concave one and then find its
solution. Besides total power constraint, we also investigate the
secrecy capacity when the transmitter is subject to the practical
per-antenna power constraint. The corresponding optimization
problem is even more difficult since it is not Schuar-concave.
Under the two power constraints considered, the corresponding
MIMOME secrecy capacities can both scale with the signal-to-noise
ratios (SNR) when the difference between numbers of antennas at
legitimate receiver and eavesdropper are large enough. However,
when the legitimate receiver and eavesdropper have a single
antenna each, such SNR scalings do not exist for both cases.
\end{abstract}

\section{ Introduction}
Key-based enciphering is a well-adopted technique to ensure the
security in current data transmission system. However, for secure
communications in wireless networks, the distributions and
managements of secret keys may be challenging tasks
\cite{Liang_same_marginal}. The physical-layer security introduced
in \cite{csiszar1978broadcast}\cite{Wyner_wiretap} is appealing
due to its keyless nature. The basic building block of
physical-layer security is the so-called wiretap channel. In this
channel, a source node wants to transmit confidential messages
securely to a legitimate receiver and to keep the eavesdropper as
ignorant of the message as possible. Wyner \cite{Wyner_wiretap}
characterized the secrecy capacity of the discrete memoryless
wiretap channel, in which the secret key was not used. The secrecy
capacity is the largest secrecy rate of communication between the
source and the destination nodes with the eavesdropper knowing no
information of the messages. In order to meet the demands of high
data rate transmissions and improve the connectivities of the
secure networks \cite{Secureconnect}, the multiple antenna systems
with security concerns are considered by several authors. In
\cite{Shafiee_secrecy_2_2_1_J}, the secrecy capacity of a Gaussian
channel with two-input, two-output, single-antenna-eavesdropper
was first characterized. This result was extended by
\cite{Khisti_MIMOME}, in which the secrecy capacities of general
Gaussian multiple-input multiple-output
multiple-antenna-eavesdropper (MIMOME) channels are proved. In
wireless environments, with perfect knowledge of the legitimate
receiver's channel state information at the transmitter (CSIT),
the time-varying characteristic of fading eavesdropper channels
can be further exploited to enhance the secrecy
\cite{gopala2008secrecy,Li_fading_secrecy_j,CollingsSecrecy,Secrecy_scJSAC}.

However, to attain the secrecy capacity results in
\cite{Shafiee_secrecy_2_2_1_J,Khisti_MIMOME,gopala2008secrecy}, at
least the perfect knowledge of the legitimate receiver's CSIT is
required. For the fast fading channels, it may be hard to track
the rapidly varying channel coefficients because of the limited
feedback bandwidth and the delay caused by the channel estimation.
Thus for fast-fading channels, it is more practical to consider
the case with only partial CSIT of the legitimate channel. For the
setting where only statistical CSIT of both legitimate and
eavesdropper channels is available, the secrecy capacity is only
rigorously characterized for multiple-input single-output
single-antenna-eavesdropper (MISOSE) Rayleigh fast-faded channels
\cite{Secrecy_sc}. A negative phenomenon, revealed in
\cite{Secrecy_sc}, showed that for the MISOSE channels with only
statistical CSIT, the secrecy capacities would neither scale with
the number of transmit antennas nor the signal-to-noise ratio
(SNR). Thus using multiple transmitter antennas in the MISOSE
system limitedly helps increase the secrecy capacity, compared
with the system using single transmitter antenna.

In this paper, we want to overcome the aforementioned drawbacks of
the MISOSE system. We then focus on the MIMOME fast Rayleigh-faded
channels where the transmitter only have the statistical CSIT of
the legitimate and eavesdropper channels. Two different power
constraints are considered. One is the total power constraint over
all transmit antennas, and the other is the more practical
per-antenna power constraint. Under the total power constraint
over all transmit antennas, the MIMOME secrecy capacity is
characterized when antennas of the legitimate receiver are more
than (or equal to) those of the eavesdropper. To the best of the
author's knowledge, this is the first MIMOME secrecy capacity
result with partial CSIT. Compared with our previous works
\cite{Secrecy_sc}, new proof techniques are developed for the
MIMOME channels. First, we establish a new secrecy capacity
upper-bound by dividing the channel matrix of legitimate receiver
into two submatrices, one of the submatrices has dimensions equal
to those of the eavesdropper's channel matrix, to form an MIMOME
degraded channel. Second, instead of using completely monotone
property as \cite{Secrecy_sc}, which may only exists for the
MISOSE problem, we solve the stochastic MIMOME optimization
problem by transforming it to an equivalent Schur-concave problem.
The key to this transformation is using the proposed equivalent
MIMOME degraded channel. According to our secrecy capacity
results, on the contrary to \cite{Secrecy_sc}, we observe that the
SNR scaling of secrecy capacity can be obtained in the MIMOME
channels when the number of antenna of the legitimate receiver is
larger than (but not equal to) that of the eavesdropper. Also when
the number of transmit antennas is fixed, increasing the
difference between numbers of antennas of legitimate receiver and
eavesdropper helps to increase the secrecy capacities.

For the cases where the transmitters are subject to the practical
per-antenna power constraints, we also fully characterize the
secrecy capacities for the MISOSE wiretap channels with
statistical CSIT. Because the optimization problem subject to the
per-antenna power constraint is not Schur-concave, only numerical
algorithm is developed to find the secrecy capacity for the MIMOME
channel. However, we can still show that under this constraint,
when there are sufficient transmit antennas with large enough
transmitted power each, the MIMOME secrecy capacities can also
scale with SNRs. Such a scaling cannot be obtained for the MISOSE
wiretap channels under per-antenna power constraint, even when all
transmit antennas are allowed to transmit with large power.

Under total power constraint over all transmit antennas, several
works have studied the secrecy capacities in various channel
settings. For channels with full CSIT, the secrecy capacities were
found in \cite{Shafiee_secrecy_2_2_1_J,Khisti_MIMOME}; and for
ergodic slow fading channels with partial CSIT, they were found in
\cite{gopala2008secrecy}. However, for fast fading channels with
full legitimate CSIT and statistical eavesdropper CSIT, the
secrecy capacities are unknown. And several works instead studied
the achievable secrecy rate in these channels
\cite{Li_fading_secrecy_j,CollingsSecrecy,Secrecy_scJSAC}. The
settings in
\cite{Shafiee_secrecy_2_2_1_J,Khisti_MIMOME,gopala2008secrecy,Li_fading_secrecy_j,CollingsSecrecy,Secrecy_scJSAC}
are fundamentally different to ours. In this paper, fast fading
channels with only statistical CSIT of both legitimate receiver
and eavesdropper are considered. Our works are the first secrecy
capacity results for MIMOME channels with statistical CSIT. More
related works and the comparisons with our works can be found in
our MISOSE previous work \cite{Secrecy_sc} and references within.
Subject to the per-antenna power constraint, secrecy rate
optimization with full CSIT was studied in \cite{li2010optimal}.
Besides the difference between CSIT assumptions compared to ours,
the channel input matrix in \cite{li2010optimal} was optimized
numerically and the optimality was not guaranteed. However, in our
work, optimal channel input covariance matrix for the MISOSE
channel is analytically solved, and our numerical algorithm for
the MIMOME channel can guarantee the optimality.

The rest of the paper is organized as follows. In Section
\ref{Sec_system_model} we introduce the considered system model.
In Section \ref{Sec_total_power}, under total power constraint
over all tranmitter antennas, we prove the secrecy capacities for
fast fading MIMOME wiretap channels with statistical CSIT. In
Section \ref{Sec_per_antenna}, we investigate secrecy capacities
under per-antenna power constraints. Simulation results are
provided in Section \ref{Sec_simulation}, and the conclusion is
given in Section \ref{Sec_conclusion}.

\underline{\bf{Notations}}: In this paper, lower and upper case
bold alphabets denote vectors and matrices, respectively. The
zero-mean complex Gaussian random vector with covariance matrix
$\Sigma$ is denoted as $CN(0, \Sigma)$. The mutual information
between two random vectors $\bx$ and $\by$ is $I(\bx;\by)$, while
the conditional differential entropy is $h(\bx|\by)$. The
superscript $(.)^H$ denotes the transpose complex conjugate.
$|\mathbf{A}|$ and $|a|$ represent the determinant of the square
matrix $\mathbf{A}$ and the absolute value of the scalar variable
$a$, respectively. The trace of $\mathbf{A}$ is denoted by
$\tr(\mathbf{A})$. The element of $\bA$ in the $i$th row and $j$th
column is $\{\bA\}_{i,j}$. A diagonal matrix whose diagonal
entries are $a_1 \ldots a_k$ is denoted by $diag(a_1 \ldots a_k)$.
The positive semidefinite ordering between Hermitian matrices
$\bA$ and $\bB$ are denoted by $\bA\succeq \bB$ ($\bA\succ \bB$),
where $\bA-\bB$ is a positive semi-definite (definite) matrix.


\section{System Model}\label{Sec_system_model}
In the considered MIMOME wiretap channel, we study the problem of
reliably communicating a secret message $w$ from the transmitter
to the legitimate receiver, subject to a constraint on the
information attainable by the eavesdropper (in upcoming
\eqref{eq_equivocation_given_h}). The transmitter has $n_t$
antennas, while the legitimate receiver and eavesdropper
respectively have $n_r$ and $n_e$ antennas as
\begin{align}
\by&=\bH\mathbf{x}+\bn_{y}, \label{EQ_main_MISOSE} \\
\bz&=\bG\mathbf{x}+\bn_{z},\label{EQ_Eve_MISOSE}
\end{align}
where $\mathbf{x} \in \mathds{C}^{n_t \times 1}$ represents the
transmitted vector signal; the legitimate channel matrix is $\bH
\in \mathds{C}^{n_r \times n_t}$ while the eavesdropper channel
matrix is $\bG \in \mathds{C}^{n_e \times n_t}$; $\bn_y$ and
$\bn_z$ are additive white Gaussian noise vectors at the
legitimate receiver and eavesdropper, respectively, with each
element independent and identically distributed (i.i.d.),
circularly symmetric, and having zero mean and unit variance. The
channels are assumed to be fast Rayleigh fading, that is, each
element of $\bH$ and $\bG$ is i.i.d distributed as
\begin{equation} \label{eq_Rayleigh}
CN(0, \sigma_{\bh}^2) \;\; \mbox{and} \;\; CN(0, \sigma_{\bg}^2),
\end{equation}
respectively, while the channel coefficients change in each symbol
time. The $\mathbf{H}$, $\mathbf{G}$, $\bn_y$ and $\bn_z$ are
independent. We assume that the legitimate receiver knows the
instantaneous channel state information of $\mathbf{H}$ perfectly,
while the eavesdropper knows those of $\mathbf{H}$ and
$\mathbf{G}$ perfectly. As for the CSIT, only the distributions of
$\mathbf{H}$ and $\mathbf{G}$ are known while the realizations of
$\mathbf{H}$ and $\mathbf{G}$ are unknown at the transmitter.

In this work, we consider two kinds of constraints for the channel
input $\bx$. The first one is the total power constraint over all
transmitter antennas  as
\begin{equation} \label{power_cons}
\mathrm{Tr}(\Sigma_{\bx}) \leq P,
\end{equation}
where $\Sigma_{\bx}$ is the covariance matrix of $\mathbf{x}$.
Note that under \eqref{power_cons}, the transmitter can perform
power allocation between transmitter antennas to increase the
secrecy capacity. In addition to \eqref{power_cons}, we also
consider a more practical per-antenna constraint for the
transmitter antennas as
\begin{equation} \label{eq_idv_power_cons}
\{\Sigma_\bx\}_{ii}\leq P_i,
\end{equation}
for $i=1,\ldots, n_t$. Note that the per-antenna constraint
\eqref{eq_idv_power_cons} is more stringent than the total power
constraint \eqref{power_cons} when $\Sigma_{i=1}^{N_t} P_i \leq
P$.

The perfect secrecy and the corresponding secrecy capacity are
defined as follows. Consider a $(2^{NR}, N)$-code with an encoder
that maps the message $w\in \calW_N=\{1,2,\ldots, 2^{NR}\}$ into a
length-$N$ codeword, and a decoder at the legitimate receiver that
maps the received sequence $y^N$ (the collections of $y$ over the
code length $N$) from the legitimate channel
\eqref{EQ_main_MISOSE} to an estimated message $\hat w\in\calW_N$.
We then have the following definitions, where $\bz^N$,
$\mathbf{H}^N$, and $\mathbf{G}^N$ are the collections of $\bz$,
$\mathbf{H}$, and $\mathbf{G}$ over the code length $N$,
respectively.
\begin{Def}[Secrecy Capacity
\cite{Liang_same_marginal}\cite{Wyner_wiretap}\cite{gopala2008secrecy}]
\label{Def_Perfect} {\it Perfect secrecy is achievable with rate
$R$ if, for any $\varepsilon>0$, there exists a sequence of
$(2^{NR}, N)$-codes and an integer $N_0$ such that for any $N>N_0$
\begin{align}
&{R_e=h(w|\bz^N,\mathbf{H}^N, \mathbf{G}^N)/N \geq R-\varepsilon},
\label{eq_equivocation_given_h} \\ \mathrm{and} \;\;\; &{\rm
Pr}(\hat{w}\neq w) \leq \varepsilon, \notag
\end{align}
where { $R_e$ in \eqref{eq_equivocation_given_h} is the
equivocation} rate and $w$ is the secret message. The {\bf secrecy
capacity} is the supremum of all achievable secrecy rates.}
\end{Def}
Note that the perfect secrecy requirement in
\eqref{eq_equivocation_given_h} is measured by $I(w;
\bz^N,\mathbf{H}^N, \mathbf{G}^N)/N=R-R_e$, which is based on all
the observations $(\bz^N,\mathbf{H}^N, \mathbf{G}^N)$ the
eavesdropper has.

From Csisz{\'{a}}r and K{\"{o}}rner's seminal work
\cite{csiszar1978broadcast}, we know that the secrecy capacity of
MIMOME channel \eqref{EQ_main_MISOSE} and \eqref{EQ_Eve_MISOSE} is
\begin{align}
&\underset{U}{\max}\;\;I(U;\by,\bH)-I(U;\bz,\bH,\bG), \notag \\
=&\underset{U}{\max}\;\;I(U;\by|\bH)-I(U;\bz|\bH,\bG),
\label{EQ_partial_CSI_rate}
\end{align}
where $U$ is an auxiliary random variable satisfying the Markov
relationship $U\rightarrow \bx \rightarrow (\by,\bH), (\bz,
\bH,\bG)$, and \eqref{EQ_partial_CSI_rate} results from the fact
that the transmitter does not have the knowledge of the
realizations of $\bH$ and $\bG$. However, the optimal choice of
$U$ which maximizes the secrecy capacity of considered fast fading
MIMOME channel is \textit{unknown}. In this paper, we want to
fully characterize the optimal $U$. Note that the secrecy capacity
considered here is achieved by encoding over multiple channel
states and the perfect secrecy constraint must be satisfied for
all $N>N_0$. This implies that no secrecy outage
\cite{yuksel2011diversity} \cite{CollingsSecrecy} is allowed. In
delay-limited applications, such perfect secrecy condition may not
be achievable and, thus, a tradeoff exists between secrecy rate
and secrecy outage probability. These issues have been discussed
in \cite{yuksel2011diversity} \cite{CollingsSecrecy} and are
beyond the scope of this paper.

\section{Secrecy capacity under the \\ total power constraint}\label{Sec_total_power}
In this section, we explicitly find the optimal $U$ in
\eqref{EQ_partial_CSI_rate} for wiretap channel
\eqref{EQ_main_MISOSE}\eqref{EQ_Eve_MISOSE}, and fully
characterize the MIMOME secrecy capacity with statistical CSIT in
the upcoming Theorem \ref{Theorem_No_CQI}. When there is full CSIT
\cite{Khisti_MIMOME}, one can find the optimal auxiliary random
variable $U$ by constructing an equivalent degraded MIMOME channel
to upper-bound the secrecy capacity. However, with only
statistical CSIT, if one naively applies the degraded channel
construction method in \cite{Khisti_MIMOME}, the resulting secrecy
capacity upper bound will depend on the realizations of
($\bH$,$\bG$) and become very loose. Thus in general it is very
hard to find the optimal $U$ maximizing
\eqref{EQ_partial_CSI_rate}

Due to the difficulty mentioned in the previous paragraph, with
statistical CSIT, finding the optimal $U$ maximizing
\eqref{EQ_partial_CSI_rate} is very hard in general. However, in
the following Lemma, for the special cases where the MIMOME
channels are Rayleigh faded as \eqref{eq_Rayleigh}, we show that
one may still construct a degraded channel for upper-bounding and
find the optimal $U$ maximizing \eqref{EQ_partial_CSI_rate}. The
key for building this equivalent degraded MIMOME channel for
\eqref{EQ_main_MISOSE}\eqref{EQ_Eve_MISOSE} is replacing the
legitimate channel $\bH$ with equivalent $\bH'$ in upcoming
\eqref{eq_H_divide} as follows. When $n_r \geq n_e$, one can
divide the legitimate channel matrix $\bH$ in
\eqref{EQ_main_MISOSE} as two submatrices
\begin{equation} \label{eq_split_H}
\bH=[\bH^T_{(n_r-n_e)} \; \bH^T_{n_e}]^T,
\end{equation}
where $\bH_{(n_r-n_e)} \in \mathds{C}^{(n_r-n_e) \times n_t}$ and
$\bH_{n_e} \in \mathds{C}^{n_e \times n_t}$, with each element of
$\bH_{(n_r-n_e)}$ and $\bH_{n_e}$ distributed as i.i.d. Gaussian
$CN(0,\sigma_h^2)$. From the properties of complex Gaussian
distributions, the distribution of $\bH$ is the same as that of
\begin{equation} \label{eq_H_divide}
\bH'=[\bH^T_{(n_r-n_e)} \;
\left(\frac{\sigma_{\bh}}{\sigma_{\bg}}\right)\bG^T]^T,
\end{equation}
because each element of eavesdropper channel matrix $\bG$ is
distributed as $CN(0,\sigma_g^2)$ according to
\eqref{eq_Rayleigh}. With \eqref{eq_H_divide}, one can build an
equivalent degraded MIMOME channel for
\eqref{EQ_main_MISOSE}\eqref{EQ_Eve_MISOSE} (see upcoming
\eqref{eq_same_dist_leg_channel} \eqref{eq_degraded_markov}),
where the received signal at eavesdropper can be treated as a
degraded version of that at the legitimate receiver. Then we can
obtain a tight secrecy capacity upper bound and have
the following property of the optimal $U$ for \eqref{EQ_partial_CSI_rate}. \\
\begin{lemma} \label{Lem_No_CQI}
For the MIMOME fast Rayleigh fading wiretap channel
\eqref{EQ_main_MISOSE}\eqref{EQ_Eve_MISOSE} with the statistical
CSIT of $\mathbf{H}$ and $\mathbf{G}$, using Gaussian $\bx$
without prefixing $U \equiv \bx$ is the optimal transmission
strategy for \eqref{EQ_partial_CSI_rate} under \eqref{power_cons}
when $n_r \geq n_e$ and $\sigma_h \geq \sigma_g$, where $n_r$ and
$n_e$ respectively are the number of antennas at the legitimate
receiver and eavesdropper, while $\sigma_h$ and $\sigma_g$
respectively are the variances of the legitimate and eavesdropper
channels as
\eqref{eq_Rayleigh}.\\
\end{lemma}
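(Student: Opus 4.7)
The plan is to exploit the distributional identity $\bH \stackrel{d}{=} \bH'$ established in \eqref{eq_H_divide} to replace the original channel by one in which the eavesdropper's output is a stochastically degraded version of the legitimate receiver's output, and then to invoke the classical degraded-wiretap-channel machinery conditioned on the channel realizations. To begin, I would check that the secrecy formula \eqref{EQ_partial_CSI_rate} is unchanged when the pair $(\bH,\by)$ is replaced by $(\bH',\by')$ with $\by'=\bH'\bx+\bn_y$. Since $U$ (and hence $\bx$) is independent of the channels and $\bH'$ shares the marginal law of $\bH$, one immediately obtains $I(U;\by|\bH)=I(U;\by'|\bH')$. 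For the eavesdropper term, $\bG=(\sigma_g/\sigma_h)\bH'_{n_e}$ is a deterministic function of $\bH'$, while $\bH_{(n_r-n_e)}$ (as well as the original $\bH$) is independent of $(U,\bx,\bz,\bG)$; a short chain-rule computation then gives $I(U;\bz|\bH,\bG)=I(U;\bz|\bG)=I(U;\bz|\bH',\bG)$.

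Next, for each realization of $\bH'$, I would verify that $\bz$ admits a physical degradation from $\by'$. The bottom $n_e$ rows of $\by'$ read $(\sigma_h/\sigma_g)\bG\bx+\tilde{\bn}_y$ with $\tilde{\bn}_y\sim CN(\b0,\bI_{n_e})$; rescaling them by $\sigma_g/\sigma_h$ and then adding an independent noise $\bn''\sim CN(\b0,(1-(\sigma_g/\sigma_h)^2)\bI_{n_e})$ produces a vector with the same conditional law as $\bz$ given $\bG$. The hypothesis $\sigma_h\geq\sigma_g$ is precisely what makes this added noise variance non-negative, and $n_r\geq n_e$ is what allows the original split \eqref{eq_split_H} in the first place. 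Hence the Markov chain $U\to\bx\to\by'\to\bz$ holds conditionally on $(\bH',\bG)$, and the standard data-processing argument for degraded wiretap channels yields
\[
I(U;\by'|\bH')-I(U;\bz|\bH',\bG)\leq I(\bx;\by'|\bH')-I(\bx;\bz|\bH',\bG),
\]
with equality attainable by $U\equiv\bx$. Thus no prefixing is optimal.

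To show that the optimal input is Gaussian, I would fix any $\bx$ with covariance $\Sigma_{\bx}$ satisfying \eqref{power_cons} and compare it against a Gaussian surrogate of the same covariance. For every realization $\bH'_0$ the conditional problem is a Gaussian degraded MIMO wiretap channel with deterministic channel matrices, for which the result of \cite{Khisti_MIMOME} (obtained via the entropy-power inequality and the degraded-channel secrecy capacity formula) asserts that $I(\bx;\by'|\bH'=\bH'_0)-I(\bx;\bz|\bH'=\bH'_0)$ is maximized by a Gaussian $\bx$ of the given covariance. Since the same Gaussian input is simultaneously optimal for every realization, it is also optimal after averaging over $\bH'$.

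The main obstacle is the bookkeeping in the first two steps: substituting $\bH'$ for $\bH$ introduces correlation between the legitimate and eavesdropper channels, so one must verify carefully that both mutual informations in \eqref{EQ_partial_CSI_rate} remain invariant under this change, and then justify that $\sigma_h\geq\sigma_g$ together with $n_r\geq n_e$ is exactly what makes the degradation construction well defined. Once the conditional degradedness has been secured, the optimality of $U\equiv\bx$ and of Gaussian inputs follows by applying the classical Gaussian-degraded-wiretap-channel result per channel realization and then taking expectation; the remaining task of optimizing $\Sigma_{\bx}$ under \eqref{power_cons} is deferred to the subsequent Schur-concave analysis promised by the paper.
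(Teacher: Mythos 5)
Your proposal is correct and follows essentially the same route as the paper: the same-marginal substitution $\bH\stackrel{d}{=}\bH'$ from \eqref{eq_H_divide}, degradation of $\bz$ from the rescaled bottom $n_e$ rows of $\by'$ by adding independent noise of variance $1-(\sigma_g/\sigma_h)^2$ (which is where $\sigma_h\geq\sigma_g$ enters), the data-processing argument eliminating the prefix $U$, and per-realization Gaussian optimality via \cite{Khisti_MIMOME}. The only cosmetic difference is that the paper phrases the eavesdropper-side bookkeeping as a genie upper bound (giving the eavesdropper $(\bz,\bG)$ rather than $(\bz,\bH',\bG)$) and defers the explicit identity $I(U;\bz|\bH',\bG)=I(U;\bz|\bG)$ to the proof of Theorem \ref{Theorem_No_CQI}, whereas you carry it out directly; the substance is identical.
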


\begin{proof}
We first form the degraded MIMOME channel with respect to
\eqref{EQ_main_MISOSE}\eqref{EQ_Eve_MISOSE}. Here we will only
consider the cases where $\sigma^2_g>0$, since the case with
$\sigma^2_g=0$ is a trivial one which corresponds to channel
without eavesdropper. Note that $\bH$ in \eqref{eq_split_H} has
the same distribution as $\bH'$ in \eqref{eq_H_divide}. Then we
can form an equivalent received signal
\begin{equation} \label{eq_same_dist_leg_channel}
\by'=\bH'\bx+\bn_y,
\end{equation}
which has the same marginal distribution as the legitimate
received signal $\by$ in \eqref{EQ_main_MISOSE}. Now, rewrite the
above $\by'$ using \eqref{eq_H_divide} as
\begin{align}
\by'&=[\bH_{(n_r-n_e)}^T \;
\left(\frac{\sigma_{\bh}}{\sigma_{\bg}}\right)\bG^T]^T\mathbf{x}+\bn_{y}\notag
\\ &=\left[\left(\by'_{(n_r-n_e)}\right)^T\;\left(\by'_{n_e}\right)^T\right]^T, \label{eq_same_dist_channel_nor}
\end{align}
where $\by'_{(n_r-n_e)} \in \mathds{C}^{(n_r-n_e) \times 1}$ and
$\by'_{n_e} \in \mathds{C}^{n_e \times 1}$. Now we consider
\[
\frac{\sigma_{\bg}}{\sigma_{\bh}}\by'=\left[\frac{\sigma_{\bg}}{\sigma_{\bh}}\left(\by'_{(n_r-n_e)}\right)^T\;\left(\bG\bx+(\sigma_{\bg}/\sigma_{\bh})\bn_{y,n_e}\right)^T\right]^T,
\notag
\]
where the noise vector $\bn_{y,n_e} \in \mathds{C}^{n_e \times 1}$
comes from dividing the noise vector at the legitimate receiver as
$\bn_y=[\bn^T_{y,(n_r-n_e)} \;\; \bn^T_{y,n_e}]^T$. From
\eqref{EQ_Eve_MISOSE}, it is clear that when $\sigma_{\bh} \geq
\sigma_{\bg}$, we have the markov relationship that given $\bH'$
\begin{equation} \label{eq_degraded_markov}
\bx \rightarrow \frac{\sigma_{\bg}}{\sigma_{\bh}}\by' \rightarrow
\bz.
\end{equation}
The degraded MIMOME wiretap channel
$(\bx,(\sigma_{\bg}/\sigma_{\bh})\by', \bz)$ is then formed.

Now based on the proposed degraded channel
$(\bx,(\sigma_{\bg}/\sigma_{\bh})\by', \bz)$, we know that the
secrecy capacity is upper-bounded by
\begin{align}
C^{\;t}_s &\leq \underset{\bx}{\max} \;
I(\bx;\frac{\sigma_{\bg}}{\sigma_{\bh}}\by'|\bH')-I(\bx;\bz|\bG),
\label{eq_MIMOME_capacity_I_a} \\
&= \underset{\bx}{\max} \; I(\bx;\by|\bH)-I(\bx;\bz|\bG),
\label{eq_MIMOME_capacity_I}
\end{align}
where the inequality follows \cite{Secrecy_sc}, with the
right-hand-side (RHS) being the secrecy capacity of our degraded
channel but forcing the eavesdropper knowing $(\bz,\bG)$ instead
of $(\bz,\bH',\bG)$; and the equality comes from that
$(\by',\bH')$ and $(\by,\bH)$ have the same distributions. From
\cite{csiszar1978broadcast}, we also know that the RHS of
\eqref{eq_MIMOME_capacity_I} is achievable. Thus the RHS of
\eqref{eq_MIMOME_capacity_I} is the secrecy capacity $C^{\;t}_s$.
Furthermore, from \cite{Khisti_MIMOME}, we know that Gaussian
$\bx$ is optimal for the secrecy capacity in
\eqref{eq_MIMOME_capacity_I} under total power constraint
\eqref{power_cons}. Then our claim is valid.
\end{proof}

Note that on the contrary to the upper-bound with full CSIT in
\cite{Khisti_MIMOME}, our secrecy capacity upper bound
\eqref{eq_MIMOME_capacity_I_a} is independent of the realizations
of $\bH$ and $\bG$. This is why our bounds are tight for MIMOME
channels with only statistical CSIT. Compared to the proof for the
MISOSE secrecy capacity \cite{Secrecy_sc}, the key for deriving
the tight MIMOME upper-bound is separating the legitimate channel
matrix $\bH$ by two submatrices $\bH_{(n_r-n_e)}$ and $\bH_{n_e}$
as \eqref{eq_split_H}, and only introducing correlations between
$\bH_{n_e}$ and $\bG$ as \eqref{eq_H_divide}. One can treat the
unchanged submatrix $\bH_{(n_r-n_e)}$ as a ``safe'' channel matrix
without being eavesdropped, which provides SNR scaling for the
secrecy capacity. In MISOSE channel, such a $\bH_{(n_r-n_e)}$ does
not exist because $n_r=n_e=1$, and there is no SNR scaling. This
intuition is verified rigorously from the upcoming Theorem
\ref{Theorem_No_CQI} and Corollary \ref{coro_total_SNR}.

Now we fully characterize the MIMOME secrecy capacity based on
Lemma \ref{Lem_No_CQI}. Typically, the MIMOME secrecy capacity
optimization problems like the upcoming \eqref{eq_MISOSE_capacity}
are non-concave. This is due to that the MIMOME secrecy
capacities, such as \eqref{eq_MISOSE_capacity}, are a difference
of two concave functions. However, with aids of the degraded
MIMOME channels formed by \eqref{eq_H_divide}, the stochastic
MIMOME optimization problem \eqref{eq_MISOSE_capacity} can be
transformed to be a Schur-concave problem. This transformation
helps a lot to find the optimal solution
\eqref{eq_total_cons_optimal}. Note that the completely monotone
property for MISOSE optimization problem \cite{Secrecy_sc} may not
exist for the MIMOME one, and thus the method in \cite{Secrecy_sc}
is
hard to be extended to the MIMOME cases. \\
\begin{Theo} \label{Theorem_No_CQI}
Under the total power constraint
\begin{equation} \label{power_cons1}
\mathrm{Tr}\left(\Sigma_{\bx}\right) \leq P,
\end{equation}
the MIMOME secrecy capacity $C^{\;t}_s$ when $n_r \geq n_e$ and
$\sigma_h \geq \sigma_g$ is
\begin{equation}
\max_{\Sigma_{\bx}}\!\left(\E_{\bH}\!\left[\!\log\left|\bI+\bH\Sigma_{\bx}\bH^\dag\!\right|\!\right]\!-\!\E_{\bG}\!\left[\!\log\!\left|\bI+\bG\Sigma_{\bx}\bG^\dag\!\right|\!\right]\!\right),
\label{eq_MISOSE_capacity}
\end{equation}
and the optimal channel input covariance matrix subject to
\eqref{power_cons1} is
\begin{equation} \label{eq_total_cons_optimal}
{\Sigma}^*_{\bx}=\frac{P}{n_t}\bI.
\end{equation}
\[
\]
\end{Theo}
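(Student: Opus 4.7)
The formula \eqref{eq_MISOSE_capacity} is immediate from Lemma~\ref{Lem_No_CQI}: under $n_r \ge n_e$ and $\sigma_{\bh} \ge \sigma_{\bg}$, taking $U \equiv \bx$ with $\bx \sim CN(\b0, \Sigma_{\bx})$ is optimal in \eqref{EQ_partial_CSI_rate}, so substituting this Gaussian input into the RHS of \eqref{eq_MIMOME_capacity_I} and evaluating the two conditional mutual informations yields the expression to be maximized over $\Sigma_{\bx} \succeq \b0$ with $\tr(\Sigma_{\bx}) \le P$. The only remaining task is to show the maximizer is $(P/n_t)\bI$.

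First, I would reduce the problem to an optimization over the eigenvalues of $\Sigma_{\bx}$ by unitary invariance. Since each of $\bH$ and $\bG$ has i.i.d.\ complex Gaussian entries, $\bH\bU \stackrel{d}{=} \bH$ and $\bG\bU \stackrel{d}{=} \bG$ for any unitary $\bU$, so the objective $f(\Sigma_{\bx})$ in \eqref{eq_MISOSE_capacity} satisfies $f(\bU\Sigma_{\bx}\bU^\dag) = f(\Sigma_{\bx})$. Hence $f$ depends on $\Sigma_{\bx}$ only through its eigenvalues $\lambda_1,\ldots,\lambda_{n_t}$; we may restrict to diagonal $\Sigma_{\bx} = \diag(\lambda_1,\ldots,\lambda_{n_t})$ under $\sum_i \lambda_i \le P$, and the reduced objective $\tilde f(\lambda_1,\ldots,\lambda_{n_t})$ is automatically symmetric in its arguments.

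Next, I would rewrite $f$ using the degraded representation of Lemma~\ref{Lem_No_CQI}. Because $\bH \stackrel{d}{=} \bH' = [\bH_{(n_r-n_e)}^T,\,(\sigma_{\bh}/\sigma_{\bg})\bG^T]^T$ as in \eqref{eq_H_divide}, applying the Schur-complement determinant identity to $|\bI + \bH'\Sigma_{\bx}(\bH')^\dag|$ and then the matrix-inversion lemma gives
\[
f(\Sigma_{\bx}) = \E_{\bG}\!\left[\log\frac{|\bI+\alpha\,\bG\Sigma_{\bx}\bG^\dag|}{|\bI+\bG\Sigma_{\bx}\bG^\dag|}\right] + \E_{\bH_{(n_r-n_e)},\bG}\!\left[\log\bigl|\bI+\bH_{(n_r-n_e)}(\Sigma_{\bx}^{-1}+\alpha\,\bG^\dag\bG)^{-1}\bH_{(n_r-n_e)}^\dag\bigr|\right],
\]
with $\alpha = \sigma_{\bh}^2/\sigma_{\bg}^2 \ge 1$. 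Each summand above is symmetric in the $\lambda_i$'s (by the same unitary-invariance argument) and has a monotone/concave-like structure in them, so the difference-of-concaves obstruction present in the original form is removed. I would then establish Schur-concavity of $\tilde f$ via the Schur-Ostrowski criterion, i.e., $(\lambda_i-\lambda_j)(\partial_{\lambda_i}\tilde f - \partial_{\lambda_j}\tilde f) \le 0$ pairwise, exploiting the exchangeability of the i.i.d.\ entries of $\bH_{(n_r-n_e)}$ and $\bG$ together with the concavity of $\log|\bI + \cdot\,|$ in a positive semidefinite argument.

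Once Schur-concavity is in hand, the fact that $(P/n_t,\ldots,P/n_t)$ is majorized by every $(\lambda_1,\ldots,\lambda_{n_t})$ on the simplex $\{\lambda_i \ge 0,\;\sum_i \lambda_i = P\}$ yields $\tilde f(\lambda_1,\ldots,\lambda_{n_t}) \le \tilde f(P/n_t,\ldots,P/n_t)$, and monotonicity of $f$ in $\sum_i \lambda_i$ forces the total-power constraint to be active. Hence $\Sigma_{\bx}^* = (P/n_t)\bI$. The main obstacle I anticipate is the Schur-concavity step itself: the raw objective is a difference of two concave functions and is \emph{not} concave, and the completely monotone technique used for MISOSE in \cite{Secrecy_sc} does not extend to MIMOME; the degraded-channel decomposition above is precisely what repackages the difference into a form for which symmetric Schur-concavity on the $\lambda_i$'s becomes verifiable.
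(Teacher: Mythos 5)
Your skeleton matches the paper's proof step for step (Lemma~\ref{Lem_No_CQI} gives the Gaussian-input formula, unitary invariance of Rayleigh fading reduces to diagonal $\Sigma_{\bx}$, Schur-concavity plus majorization gives uniform eigenvalues, and monotonicity forces the power constraint to be active), but the one step you yourself flag as ``the main obstacle'' --- Schur-concavity --- is left as a plan rather than a proof, and the plan as stated has a hole. You propose to verify the Schur--Ostrowski criterion on the two-term decomposition obtained from $\bH'$, asserting that ``each summand has a monotone/concave-like structure'' so that ``the difference-of-concaves obstruction is removed.'' That is not substantiated: your first summand $\E_{\bG}[\log(|\bI+\alpha\bG\Sigma_{\bx}\bG^\dag|/|\bI+\bG\Sigma_{\bx}\bG^\dag|)]$ is itself still a difference of two concave log-dets, so nothing about the decomposition makes its contribution to the Schur--Ostrowski sign condition evident, and you give no indication of how the pairwise derivative inequality would actually be checked against expectations of traces of random matrix inverses.

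The paper closes this gap differently and much more cheaply: it never touches Schur--Ostrowski. Using the chain rule together with the Markov chain \eqref{eq_degraded_markov} and the independence of $\bH_{(n_r-n_e)}$ from $(\bx,\bz,\bG)$, it shows the \emph{entire} objective equals the single conditional mutual information $I(\bx;\frac{\sigma_{\bg}}{\sigma_{\bh}}\by'\,|\,\bz,\bH')$ of the degraded channel (\eqref{eq_Convex}--\eqref{eq_Convex1}), which is concave in $\Sigma_{\bx}$ by \cite{Khisti_MIMOME}; a symmetric concave function is automatically Schur-concave, so the uniform point wins on each simplex $\sum_i\lambda_i=c$. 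Your final assertion that ``monotonicity of $f$ in $\sum_i\lambda_i$ forces the constraint to be active'' is likewise exactly the content of the paper's Lemma~\ref{lem_mono}, which requires its own argument (the rewriting of $R_s$ as $\E[\log|\bI+\tilde{\bH}'(\Sigma_{\bx}^{-1}+\bG^\dag\bG)^{-1}(\tilde{\bH}')^\dag|]$ with $(\tilde{\bH}')^\dag\tilde{\bH}'=\bH_{(n_r-n_e)}^\dag\bH_{(n_r-n_e)}+(\sigma_h^2/\sigma_g^2-1)\bG^\dag\bG$, valid precisely because $\sigma_h\geq\sigma_g$) rather than being immediate. Your algebraic decomposition is correct and is essentially the same identity in a two-term form, so the ingredients are all within reach --- but as written, both the Schur-concavity and the full-power steps are asserted rather than established, and the specific route you propose for the former is harder than, and not reducible to, the concavity-plus-symmetry argument that actually does the work.
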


\begin{proof}
First we show that under our setting, subject to
\eqref{power_cons1}, the stochastic MIMOME optimization problem
\eqref{eq_MISOSE_capacity} can be transformed to an equivalent
concave problem. The key is cleverly using the same marginal
channel formed by \eqref{eq_H_divide}. Note that the objective
function in \eqref{eq_MISOSE_capacity} can be rewritten as
$I(\bx;\by|\bH)-I(\bx;\bz|\bG)$ with $\bx \sim CN(0,\Sigma_{\bx})$
as
\begin{align}
I(\bx;\by|\bH)-I(\bx;\bz|\bG) \overset{(a)}=&
I(\bx;\frac{\sigma_{\bg}}{\sigma_{\bh}}\by'|\bH')-I(\bx;\bz|\bG),
\notag \\
\overset{(b)}=&
I(\bx;\frac{\sigma_{\bg}}{\sigma_{\bh}}\by'|\bz,\bH'),
\label{eq_Convex}
\end{align}
and then from \cite{Khisti_MIMOME}, we know that the RHS of
(\ref{eq_Convex} b) is concave in $\Sigma_{\bx}$. The
(\ref{eq_Convex} a) comes from \eqref{eq_MIMOME_capacity_I}, while
(\ref{eq_Convex} b) comes from \eqref{eq_H_divide} and the fact
that
\begin{align}
&I(\bx;\frac{\sigma_{\bg}}{\sigma_{\bh}}\by'|\bH')-I(\bx;\bz|\bG)
\notag \\
\overset{(a)}=&
I(\bx;\frac{\sigma_{\bg}}{\sigma_{\bh}}\by',\bz|\bH_{(n_r-n_e)},\bG)-I(\bx;\bz|\bH_{(n_r-n_e)},\bG),
\notag \\
\overset{(b)}=&
I(\bx;\frac{\sigma_{\bg}}{\sigma_{\bh}}\by'|\bz,\bH_{(n_r-n_e)},\bG).
\label{eq_Convex1}
\end{align}
In the above, (\ref{eq_Convex1} a) comes from that
\begin{align}
&I(\bx;\frac{\sigma_{\bg}}{\sigma_{\bh}}\by'|\bH')=I(\bx;\frac{\sigma_{\bg}}{\sigma_{\bh}}\by',\bz|\bH_{(n_r-n_e)},\bG),
\notag \\ \mbox{and} \;\;\;
&I(\bx;\bz|\bG)=I(\bx;\bz|\bH_{(n_r-n_e)},\bG) \notag,
\end{align}
with the former resulting from \eqref{eq_H_divide} and the Markov
relationship \eqref{eq_degraded_markov} and the latter resulting
from the independence of $\bH_{(n_r-n_e)}$ with $(\bx,\bz,\bG)$;
while (\ref{eq_Convex1} b) comes from the chain rule of the mutual
information \cite{Book_Cover}.

After showing that \eqref{eq_MISOSE_capacity} can be transformed
to an equivalent concave problem, now we show that
\eqref{eq_MISOSE_capacity} can be further transformed to a
symmetric optimization problem. Then we can explore the
Schuar-concavity of the secrecy capacity
\eqref{eq_MISOSE_capacity} to show that the optimal
$\Sigma_{\bx}=\alpha\bI$ where $0 \leq \alpha \leq P/n_t$. For any
non-diagonal ${\Sigma}_{\bx}$, we can apply the eigenvalue
decomposition on it as ${\Sigma}_{\bx}=\bU\bD\bU^\dag$, where
$\bU$ is unitary and $\bD$ is diagonal. Then for the objective
function in \eqref{eq_MISOSE_capacity}, setting
${\Sigma}_{\bx}=\bU\bD\bU^\dag$ and ${\Sigma}_{\bx}=\bD$ will
result in the same value since $\bH$ and $\bG$ are Rayleigh
distributed as \eqref{eq_Rayleigh}. Also the total power
constraint \eqref{power_cons1} can be transformed as
\begin{equation} \label{eq_evd_cons}
\mathrm{Tr}\left(\bU\bD\bU^\dag\right)=\mathrm{Tr}\left(\bU^\dag\bU\bD\right)=\mathrm{Tr}\left(\bD\right)
\leq P.
\end{equation}
In the following, we then focus on the following optimization
problem instead
\begin{equation}
\max_{\bD}\!\left(\E_{\bH}\!\left[\!\log\left|\bI+\bH\bD\bH^\dag\!\right|\!\right]\!-\!\E_{\bG}\!\left[\!\log\!\left|\bI+\bG\bD\bG^\dag\!\right|\!\right]\!\right),
\label{eq_MISOSE_capacityD}
\end{equation}
where the diagonal matrix $\bD$ satisfying \eqref{eq_evd_cons}.
Because $\bH$ and $\bG$ are Rayleigh faded as \eqref{eq_Rayleigh},
\eqref{eq_MISOSE_capacityD} is symmetric for $\bD$. That is, all
permutations of the diagonal terms of $\bD$ will result in the
same value for the objective function in
\eqref{eq_MISOSE_capacityD}. Together with the fact that
\eqref{eq_MISOSE_capacityD} is concave as shown previously, we
know that the objective function \eqref{eq_MISOSE_capacityD} is
Schur-concave. Subject to the constraint $\mathrm{Tr}(\bD) \leq
P$, from \cite{Marshall_Inequalities}, it can be easily shown that
the optimal $\bD$ (and thus $\Sigma_{\bx}$) is $\alpha \bI$ where
$0 \leq \alpha \leq P/n_t$. This result comes from the fact that
the diagonal entries $\{\alpha,\ldots,\alpha\}$ are majorized by
any other diagonal entries $\{d_1,\ldots,d_{n_t}\}$ of $\bD$ if
$\frac{1}{n_t}\sum_{i=1}^{n_t}{d_i}=\alpha$, and properties of the
Schur-concave function \cite{Marshall_Inequalities}.

Finally, we will show that using all available power, that is,
$\Sigma_{\bx}=(P/n_t)\bI$ is optimal for
\eqref{eq_MISOSE_capacity}. To do this, we will first show an
important property of the objective function in
\eqref{eq_MISOSE_capacity} with respect to the matrix partial
ordering as in the following Lemma.
\\

\begin{lemma} \label{lem_mono}
For the objective function of the optimization problem in
\eqref{eq_MISOSE_capacity},
\begin{equation} \label{eq_capacity_object}
R_s(\Sigma_{\bx}) \triangleq
\E_{\bH}\!\left[\!\log\left|\bI+\bH\Sigma_{x}\bH^\dag\!\right|\!\right]\!-\!\E_{\bG}\!\left[\!\log\!\left|\bI+\bG\Sigma_{x}\bG^\dag\!\right|\!\right],
\end{equation}
we have the following properties when $n_r \geq n_e$ and
$\sigma_h \geq \sigma_g$ \\

\noindent (I) If two channel input covariance matrices satisfy
$\Sigma^2_x \succeq \Sigma^1_x \succeq \mathbf{0}$, then
\begin{equation} \label{eq_mono_sigma_x}
R_s(\Sigma^2_x)\geq R_s(\Sigma^1_x).
\end{equation} \\
(II) If $\Sigma^1_x \succ \mathbf{0}$, then $R_s(\Sigma^1_x)>0$.
\[
\]
\end{lemma}
The proof of Lemma \ref{lem_mono} is given in Appendix
\ref{app_mono}, which results from the equivalent degraded MIMOME
channels described in the proof of Lemma \ref{Lem_No_CQI}. As
aforementioned, we know the optimal $\Sigma_{\bx}=\alpha \bI$ for
\eqref{eq_MISOSE_capacity} subject to \eqref{power_cons1}. Now we
can show that $\alpha=P/n_t$ is optimal for
\eqref{eq_MISOSE_capacity} using Lemma \ref{lem_mono} as follows.
Let us substitute $\Sigma_{\bx}=\alpha\bI$ into the objective
function in \eqref{eq_MISOSE_capacity}, it becomes
\begin{equation} \label{eq_total_alpha}
\E_{\bH}\!\left[\!\log\left|\bI+\alpha\bH\bH^\dag\!\right|\!\right]\!-\!\E_{\bG}\!\left[\!\log\!\left|\bI+\alpha\bG\bG^\dag\!\right|\!\right]=C(\alpha),
\end{equation}
where $0 \leq \alpha \leq P/n_t$. If $\alpha>0$, we know that
\[
(P/n_t)\bI \succeq \alpha \bI \succ \mathbf{0}.
\]
Then from Property (I) of Lemma \ref{lem_mono}, we know that
$C(P/n_t) \geq C(\alpha)$ for any $\alpha>0$. As for the case
$\alpha = 0$, we know that $C(0)=0$ from \eqref{eq_total_alpha}.
However, if $\alpha>0$, we have $ C(\alpha)>0$ from Property (II)
of Lemma \ref{lem_mono}. Then we know that
\[
C(P/n_t) \geq C(\alpha)> C(0),
\]
for any $\alpha>0$. Thus $\alpha=P/n_t$ is optimal for
\eqref{eq_total_alpha}. It concludes our proof.
\end{proof}

\textit{Remark:} Note that in the final step of Theorem
\ref{Theorem_No_CQI}'s proof, we show that the optimal channel
input covariance matrix happens when the total power constraint in
\eqref{power_cons1} is met with equality. This fact is not always
true for the wiretap channel and counter examples are given
\cite{Secrecy_Finite}\cite{Li_fading_secrecy_j}. When the transmit
power increases, both the SNRs at the legitimate receiver and the
eavesdropper increase. Then the secrecy capacity may not always be
maximized with using all available power. Indeed, this property
was also examined in \cite{Secrecy_scJSAC}. However, in
\cite{Secrecy_scJSAC}, the perfect CSIT of the legitimate receiver
is assumed to be  available, which is fundamentally different to
our setting.

Now we have the following result which characterizes the secrecy
capacity in Theorem \ref{Theorem_No_CQI} with respect to the
number of antennas $(n_t,n_r,n_e)$ at the transmitter, legitimate
receiver and eavesdropper respectively. The proof is given in
Appendix
\ref{App_coro_total_SNR}. \\

\begin{coro} \label{coro_total_SNR}
Under total power constraint \eqref{power_cons1}, we have the
following asymptotic results for the MIMOME secrecy capacity
$C^{\;t}_s$ as
\[
\lim_{P_g \rightarrow
\infty}\frac{C^{\;t}_s-n_r\log(\sigma^2_h/\sigma^2_g)}{\log
P_g}=\left\{\begin{array}{lc} n_r-n_e, & n_t \geq n_r > n_e\\
n_t-n_e, & n_r \geq n_t > n_e\\
0, &   n_r \geq n_e \geq n_t\\
\end{array}\right.
\]
when $\sigma_h \geq \sigma_g > 0$, where $P_g=\sigma^2_gP$ is the
equivalent received SNR at the eavesdropper. \\
\end{coro}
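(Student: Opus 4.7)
The plan is to start from Theorem \ref{Theorem_No_CQI}, substitute the optimal covariance $\Sigma_{\bx}^{*} = (P/n_t)\bI$, and perform a high-SNR expansion of the two expected log-determinants separately. Normalizing $\bH = \sigma_{\bh}\tilde{\bH}$ and $\bG = \sigma_{\bg}\tilde{\bG}$ so that $\tilde{\bH}, \tilde{\bG}$ have i.i.d.\ $CN(0,1)$ entries, and writing $P_h \equiv \sigma_{\bh}^2 P$ so that $P_h / P_g = \sigma_{\bh}^2/\sigma_{\bg}^2$ is a fixed ratio and $P_h \to \infty$ iff $P_g \to \infty$, the capacity expression in Theorem \ref{Theorem_No_CQI} reads
\[
C^{\;t}_s = \E\!\left[\log|\bI + (P_h/n_t)\tilde{\bH}\tilde{\bH}^\dag|\right] - \E\!\left[\log|\bI + (P_g/n_t)\tilde{\bG}\tilde{\bG}^\dag|\right].
\]

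First I would apply the identity $|\bI + \bA\bA^\dag| = |\bI + \bA^\dag\bA|$ to replace each Wishart-type matrix with its almost-surely non-singular form, of sizes $m_h \equiv \min(n_r, n_t)$ and $m_g \equiv \min(n_e, n_t)$ respectively. For a positive definite random matrix $\bW$ of size $n \times n$, the factorization
\[
\log|\bI + c\bW| = n\log c + \log|\bW| + \log|\bI + (1/c)\bW^{-1}|
\]
shows that the final residual term is non-negative and monotonically decreases to zero as $c \to \infty$. By monotone convergence, together with the classical fact that $\E[\log|\bW|]$ is finite for a non-singular central complex Wishart matrix (expressible through digamma functions), one obtains
\[
\E[\log|\bI + c\bW|] = n\log c + \E[\log|\bW|] + o(1).
\]

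Applying this expansion to both log-determinants and substituting $P_h = (\sigma_{\bh}^2/\sigma_{\bg}^2) P_g$ yields
\[
C^{\;t}_s = m_h\log(\sigma_{\bh}^2/\sigma_{\bg}^2) + (m_h - m_g)\log P_g + O(1).
\]
Subtracting $n_r\log(\sigma_{\bh}^2/\sigma_{\bg}^2)$, dividing by $\log P_g$, and sending $P_g \to \infty$, the constant offset $(m_h - n_r)\log(\sigma_{\bh}^2/\sigma_{\bg}^2)$ and the $O(1)$ term vanish, leaving the scaling coefficient $m_h - m_g$. Enumerating $(m_h, m_g)$ across the three regimes gives $(n_r, n_e)$, $(n_t, n_e)$, and $(n_t, n_t)$, which reproduce the claimed limits $n_r - n_e$, $n_t - n_e$, and $0$.

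The principal technical subtlety will be justifying the interchange of limit and expectation in the high-SNR expansion. Since $\log|\bI + (1/c)\bW^{-1}|$ is non-negative and monotonically decreasing in $c$, monotone convergence applies once it is established that $\E[\log|\bW|] > -\infty$; this in turn follows from the explicit joint eigenvalue density of a central complex Wishart matrix, which places no mass at zero eigenvalues. Once these analytical preliminaries are in place, the remainder of the argument is routine case bookkeeping.
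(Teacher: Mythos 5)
Your proof is correct and follows essentially the same route as the paper: substitute the optimal $\Sigma^*_{\bx}=(P/n_t)\bI$ from Theorem \ref{Theorem_No_CQI} and extract the high-SNR slope $\min(n_r,n_t)-\min(n_e,n_t)$ of the difference of two ergodic MIMO capacities. The paper does this via the unordered-eigenvalue representation of the expected log-determinants and only writes out the case $n_t\geq n_r>n_e$ (asserting the other two are similar), whereas your factorization $\log|\bI+c\bW|=n\log c+\log|\bW|+\log|\bI+(1/c)\bW^{-1}|$ handles all three regimes uniformly and makes explicit the limit--expectation interchange (via monotone convergence and finiteness of $\E[\log|\bW|]$ for a non-singular Wishart matrix) that the paper leaves implicit.
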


In \cite{Secrecy_sc}, it was shown that when $n_r=n_e=1$, the
secrecy capacity does not scale with $P$. Our Corollary
\ref{coro_total_SNR} shows that to make secrecy capacity scale
with $P$, we must let $n_r-n_e$ larger than (but not equal to)
zero. Also adding enough number of transmit antennas to make
$n_t>n_e$ is very important for increasing the secrecy capacity.
Indeed, from our numerical results in Section
\ref{Sec_simulation}, with fixed $n_t>n_e$, increasing the
difference $n_r-n_e$ will help to increase the secrecy capacities
for all SNR regimes (besides the high SNR regime proved in
Corollary \ref{coro_total_SNR}).

From Property (II) of Lemma \ref{lem_mono}, the secrecy capacity
in Theorem \ref{Theorem_No_CQI} is always positive when $P>0$,
$n_r \geq n_e$ and $\sigma_h > \sigma_g$. It will also be
interesting to see when the secrecy capacity is zero. We have the
following result, where the proof is similar
to that of Lemma \ref{Lem_No_CQI} and neglected.\\

\begin{coro}
\label{Coro_No_CQI_zero} For the MIMOME fast Rayleigh fading
wiretap channel \eqref{EQ_main_MISOSE}\eqref{EQ_Eve_MISOSE} with
the statistical CSIT of $\mathbf{H}$ and $\mathbf{G}$, the secrecy
capacity is zero when $n_r \leq n_e$ and $\sigma_h \leq \sigma_g$.
\end{coro}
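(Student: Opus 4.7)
The plan is to mirror the degradedness construction used in Lemma \ref{Lem_No_CQI}, but with the roles of the legitimate receiver and the eavesdropper swapped. Whereas Lemma \ref{Lem_No_CQI} embeds a scaled copy of $\bG$ inside a resampled legitimate matrix $\bH'$ so that the eavesdropper becomes degraded, here I would embed a scaled copy of $\bH$ inside a resampled eavesdropper matrix $\bG'$, using the symmetric hypotheses $n_e \geq n_r$ and $\sigma_g \geq \sigma_h$ to make the legitimate receiver a degraded version of the eavesdropper.

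Concretely, I would first partition $\bG = [\bG_{n_r}^T,\; \bG_{(n_e-n_r)}^T]^T$ with $\bG_{n_r} \in \mathds{C}^{n_r \times n_t}$ and $\bG_{(n_e-n_r)} \in \mathds{C}^{(n_e-n_r) \times n_t}$. Because $\sigma_h \leq \sigma_g$, the matrix $(\sigma_g/\sigma_h)\bH$ has i.i.d.\ $CN(0,\sigma_g^2)$ entries, so
\[
\bG' \;=\; \bigl[(\sigma_g/\sigma_h)\,\bH^T,\;\; \bG_{(n_e-n_r)}^T\bigr]^T
\]
has the same marginal distribution as $\bG$ (the degenerate case $\sigma_h=0$ is trivial). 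The equivalent eavesdropper signal $\bz' = \bG'\bx + \bn_z$ then has the same joint distribution with $\bx$ as $(\bz,\bG)$ has with $\bx$. Scaling the top $n_r$ rows of $\bz'$ by $\sigma_h/\sigma_g \leq 1$ gives $\bH\bx + (\sigma_h/\sigma_g)\bn_{z,n_r}$; adding an independent $CN(\mathbf{0},(1-(\sigma_h/\sigma_g)^2)\bI)$ Gaussian vector yields a signal $\by'$ whose conditional distribution given $(\bx,\bH)$ coincides with that of $\by=\bH\bx+\bn_y$. Hence $\bx \to (\bz',\bG') \to (\by',\bH)$ is a Markov chain under this coupling.

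To finish, I would start from the Csisz\'ar--K\"orner formula \eqref{EQ_partial_CSI_rate}. Since $\bH$ is independent of $(U,\bG)$ in the original joint law, one has $I(U;\bz|\bH,\bG) \geq I(U;\bz|\bG)$. Applying the data processing inequality to the Markov chain above, together with the fact that $\bH$ and $\bG'$ are independent of $U$ (so $I(U;\bH)=I(U;\bG')=0$), yields $I(U;\by|\bH) = I(U;\by'|\bH) \leq I(U;\bz'|\bG') = I(U;\bz|\bG)$. Combining these two inequalities makes the bracket in \eqref{EQ_partial_CSI_rate} non-positive for every admissible $U$, giving $C^{\;t}_s \leq 0$; the trivial bound $C^{\;t}_s \geq 0$ then forces $C^{\;t}_s = 0$.

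The main obstacle, exactly as in Lemma \ref{Lem_No_CQI}, is the careful bookkeeping of joint distributions: the original $\bH$ and $\bG$ are independent, while the coupling used to construct $\bG'$ introduces correlation between $\bH$ and $\bG'$. One must verify that only the pairs $(\bx,\by,\bH)$ and $(\bx,\bz,\bG)$ — the ones actually entering the secrecy capacity expression — retain their original marginals, and that independence of the channel matrices from $U$ is preserved throughout the data processing step.
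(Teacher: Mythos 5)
Your proposal is correct and follows exactly the route the paper intends: the paper omits this proof, remarking only that it is ``similar to that of Lemma \ref{Lem_No_CQI}'', and your argument is precisely that mirrored construction --- embedding a scaled copy of $\bH$ into a resampled $\bG'$ so that the legitimate receiver becomes the degraded user, then applying the same-marginal and data-processing arguments to force the Csisz\'ar--K\"orner expression to be non-positive. The bookkeeping you flag (only the marginals of $(\bx,\by,\bH)$ and $(\bx,\bz,\bG)$ need to be preserved, and the channel matrices remain independent of $U$) is handled correctly.
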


\section{Secrecy capacity under the \\ per-antenna power constraint}\label{Sec_per_antenna}

After investigating the MIMOME secrecy capacity $C^{\;t}_s$ under
the total power constraint over all transmitter antennas, now we
consider the MIMOME channel under the more practical per-antenna
power constraint. First, we show that the MIMOME secrecy capacity
under the per-antenna power constraint can be expressed as the
following convex optimization problem, with proof given in
Appendix
\ref{app_Pro_MIMO_per_antenna}. \\

\begin{proposition} \label{Pro_MIMO_per_antenna}
Under the per-antenna power constraint
\begin{equation} \label{eq_per_power_cons}
\{\Sigma_{\bx}\}_{ii} \leq P_i,
\end{equation}
$i=1,\ldots,n_t$, when $n_r \geq n_e$ and $\sigma_h \geq
\sigma_g>0$, the MIMOME secrecy capacity $C^p_s$ is
\begin{equation} \label{eq_per_MIMOME_capacity_old}
\max_{\Sigma_{\bx}}\!\left(\E_{\bH}\!\left[\!\log\left|\bI+\bH\Sigma_{\bx}\bH^\dag\!\right|\!\right]\!-\!\E_{\bG}\!\left[\!\log\!\left|\bI+\bG\Sigma_{\bx}\bG^\dag\!\right|\!\right]\!\right),
\end{equation}
which can be transformed to a concave optimization problem subject
to \eqref{eq_per_power_cons} as
\begin{align}
\max_{\Sigma_{\bx}}\Bigg(&\E_{\bG,\bH_{(n_r-n_e)}}\!\left[\!\log\left|\bI+\Sigma_{\bx}\left(\bH^\dag_{(n_r-n_e)}\bH_{(n_r-n_e)}+\frac{\sigma^2_h}{\sigma^2_g}\bG^\dag\bG\right)\!\right|\right]
\notag
\\&-\!\E_{\bG}\!\left[\!\log\!\left|\bI+\bG\Sigma_{\bx}\bG^\dag\!\right|\!\right]
\Bigg),
 \label{eq_per_MIMOME_capacity}
\end{align}
where $\bG$ is the eavesdropper channel matrix and
$\bH_{(n_r-n_e)}$ is the sub-matrix of the legitimate channel as
defined in \eqref{eq_H_divide}. \\
\end{proposition}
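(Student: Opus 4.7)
The plan is to prove Proposition~\ref{Pro_MIMO_per_antenna} in two stages. In stage one, I establish the capacity expression \eqref{eq_per_MIMOME_capacity_old} by adapting the degraded-channel argument of Lemma~\ref{Lem_No_CQI} to the per-antenna constraint. In stage two, I derive the alternative form \eqref{eq_per_MIMOME_capacity} via a simple algebraic identity and then argue that the resulting objective is concave in $\Sigma_{\bx}$ by reinterpreting it as a conditional mutual information in a degraded channel.

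For stage one, I would first note that nothing in the construction of the equivalent degraded MIMOME channel \eqref{eq_H_divide}--\eqref{eq_degraded_markov} depends on the power constraint: it uses only the joint distribution of $(\bH,\bG)$ and the hypothesis $\sigma_h\geq\sigma_g$. Consequently the upper bound
\[
C^p_s \;\leq\; \max_{\Sigma_{\bx}}\bigl(I(\bx;\by|\bH)-I(\bx;\bz|\bG)\bigr),
\]
derived exactly as in \eqref{eq_MIMOME_capacity_I_a}--\eqref{eq_MIMOME_capacity_I}, still holds when the maximization is taken over $\Sigma_{\bx}$ satisfying $\{\Sigma_{\bx}\}_{ii}\leq P_i$. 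For achievability, the same bound is attained by Gaussian $\bx$ with $U\equiv\bx$ because the Khisti--Wornell \cite{Khisti_MIMOME} result---that Gaussian inputs exhaust the secrecy capacity of a degraded Gaussian MIMO wiretap channel---is valid under any convex covariance constraint, and \eqref{eq_per_power_cons} is manifestly convex in $\Sigma_{\bx}$. Writing out both conditional mutual informations for $\bx\sim CN(0,\Sigma_{\bx})$ then produces \eqref{eq_per_MIMOME_capacity_old}.

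For stage two, I would apply Sylvester's identity $|\bI+\bH\Sigma_{\bx}\bH^\dag|=|\bI+\Sigma_{\bx}\bH^\dag\bH|$ and then exploit the fact that $\bH$ has the same distribution as $\bH'$ in \eqref{eq_H_divide}. Because
\[
(\bH')^\dag\bH' \;=\; \bH^\dag_{(n_r-n_e)}\bH_{(n_r-n_e)} + \tfrac{\sigma_h^2}{\sigma_g^2}\bG^\dag\bG,
\]
the first expected log-determinant in \eqref{eq_per_MIMOME_capacity_old} equals exactly the first one in \eqref{eq_per_MIMOME_capacity}, so the two formulations coincide. For concavity, I would invoke the identity $R_s(\Sigma_{\bx}) = I\!\left(\bx;\tfrac{\sigma_g}{\sigma_h}\by'\,\big|\,\bz,\bH_{(n_r-n_e)},\bG\right)$ already established in \eqref{eq_Convex1}. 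For each fixed realization of $(\bH_{(n_r-n_e)},\bG)$ the quantity inside the expectation is the secrecy rate of a deterministic degraded Gaussian MIMO wiretap channel with Gaussian input, which is concave in $\Sigma_{\bx}$ by \cite{Khisti_MIMOME}. Concavity survives the outer expectation over $(\bH_{(n_r-n_e)},\bG)$, and the constraint set is a convex polytope in $\Sigma_{\bx}$; hence \eqref{eq_per_MIMOME_capacity} is a concave maximization problem as claimed.

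The main obstacle, in my view, is the concavity step rather than the algebraic reformulation. Viewed directly, \eqref{eq_per_MIMOME_capacity} is a difference of two concave expected log-determinants and is therefore not obviously concave; without the degraded-channel interpretation it is unclear how to establish concavity at all. This is precisely where the construction \eqref{eq_H_divide}---which pairs $\bG$ with a matching-dimension submatrix of $\bH$ to realize a degraded channel---does its work, converting the difference-of-concave expression into a single conditional mutual information to which the standard degraded-wiretap concavity result applies, and rendering \eqref{eq_per_MIMOME_capacity} a tractable convex program suitable for the numerical treatment in Section~\ref{Sec_per_antenna}.
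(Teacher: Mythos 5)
Your proposal is correct and follows essentially the same route as the paper: extend the degraded-channel upper bound of Lemma~\ref{Lem_No_CQI} to the per-antenna constraint, argue Gaussian optimality without prefixing, rewrite the objective via the same-marginal channel $\bH'$ of \eqref{eq_H_divide}, and obtain concavity from the conditional-mutual-information identity \eqref{eq_Convex1}. The only substantive difference is that where you assert Gaussian optimality holds ``under any convex covariance constraint,'' the paper makes this rigorous by first reducing the per-antenna constraint to the covariance-matrix constraint $\E(\bx\bx^\dag)\preceq\bS$ following \cite{NIT_Weingarten_CRegion_sIT04} and then maximizing $h(\frac{\sigma_{\bg}}{\sigma_{\bh}}\by'|\bz,\bH')$ over inputs with fixed covariance via \cite[Lemma 1]{Khisti_MIMOME} and \cite[Theorem 8.6.5]{Book_Cover}.
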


Unlike the optimization in Theorem \ref{Theorem_No_CQI}, subject
to \eqref{eq_per_power_cons}, the optimal $\Sigma_{\bx}$ of
problem \eqref{eq_per_MIMOME_capacity_old} is very hard to find
analytically for general $n_r$. The main difficulty is that the
per-antenna power constraint \eqref{eq_per_power_cons} is not
symmetric. Note that \eqref{eq_per_power_cons} is related to
eigenvectors of $\Sigma_\bx$, while the total power constraint
\eqref{power_cons1} ($\mathrm{Tr}\{\Sigma_\bx\}$) is independent
of these eigenvectors. Thus although the total power constraint
can be transformed to \eqref{eq_evd_cons} via the eigenvalue
decomposition of $\Sigma_{\bx}$, the same technique cannot be
applied to the per-antenna constraint \eqref{power_cons1}. Thus we
only claim that the problem \eqref{eq_per_MIMOME_capacity_old}
subject to \eqref{eq_per_power_cons} can be transformed to a
concave one as \eqref{eq_per_MIMOME_capacity}, but not the
Schuar-concave one as in the proof of Theorem
\ref{Theorem_No_CQI}. Lack of symmetry in constraint
\eqref{eq_per_power_cons} makes finding analytical solution of it
difficult. Nevertheless, we still have the following result for
the structure of the optimal
$\Sigma^*_x$ in Proposition \ref{Pro_MIMO_per_antenna} as \\

\begin{proposition}
\label{Pro_Full_power} Under the per-antenna power constraint
$\{\Sigma_{\bx}\}_{ii} \leq P_i, i=1,\ldots,n_t$, when $n_r \geq
n_e$ and $\sigma_h \geq \sigma_g$, there exists an optimal channel
input covariance matrix $\Sigma^*_x$ for the MIMOME secrecy
capacity optimization problem \eqref{eq_per_MIMOME_capacity_old}
which satisfies
\begin{equation} \label{eq_per_equality}
\{\Sigma^*_{\bx}\}_{ii}=P_i, \;\;\; i=1,\ldots,n_t,
\end{equation}
that is, each antenna should fully use its available power. \\
\end{proposition}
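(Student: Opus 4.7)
The plan is to exploit the monotonicity property (I) of Lemma \ref{lem_mono}, which says that $R_s(\cdot)$ is non-decreasing with respect to the positive semidefinite ordering, together with the observation that ``bumping up'' a diagonal entry of a PSD matrix via a rank-one update $\lambda \be_i \be_i^\dag$ (with $\lambda\geq 0$ and $\be_i$ the $i$th standard basis vector) is itself a PSD perturbation and affects only the $(i,i)$ entry of the matrix.

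First, I would let $\Sigma^o_{\bx}$ be any optimizer of \eqref{eq_per_MIMOME_capacity_old} under the per-antenna constraint \eqref{eq_per_power_cons}; such an optimizer exists because the feasible set is compact and, by Proposition \ref{Pro_MIMO_per_antenna}, the objective can be rewritten as the continuous concave functional in \eqref{eq_per_MIMOME_capacity}. Define
\[
\Sigma^*_{\bx} \;\triangleq\; \Sigma^o_{\bx} + \sum_{i=1}^{n_t} \lambda_i\, \be_i \be_i^\dag, \qquad \lambda_i \;\triangleq\; P_i - \{\Sigma^o_{\bx}\}_{ii} \;\geq\; 0.
\]
Next, I would verify three things about this construction: (a) $\Sigma^*_{\bx}$ is PSD, since it is the sum of the PSD matrix $\Sigma^o_{\bx}$ and non-negative multiples of the rank-one PSD matrices $\be_i \be_i^\dag$; (b) $\Sigma^*_{\bx}$ is feasible, because adding $\lambda_i \be_i \be_i^\dag$ alters only the $(i,i)$ entry and raises it exactly to $P_i$, so all off-diagonal entries are untouched and every diagonal entry ends up equal to $P_i$; and (c) $\Sigma^*_{\bx} \succeq \Sigma^o_{\bx} \succeq \mathbf{0}$, which is immediate from the construction.

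Finally, applying Property (I) of Lemma \ref{lem_mono}, the monotonicity in PSD ordering gives $R_s(\Sigma^*_{\bx}) \geq R_s(\Sigma^o_{\bx})$. Since $\Sigma^o_{\bx}$ was already optimal and $\Sigma^*_{\bx}$ is feasible, this inequality must be an equality, so $\Sigma^*_{\bx}$ is itself optimal and satisfies the per-antenna constraints with equality, establishing \eqref{eq_per_equality}.

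There is no real obstacle here beyond carefully checking that the rank-one update leaves the per-antenna constraints at the other indices unchanged, which is immediate from $\be_i \be_i^\dag$ being supported only at position $(i,i)$; the whole argument rests entirely on Lemma \ref{lem_mono}(I) and does not require the concavity or Schur-concavity machinery used in Theorem \ref{Theorem_No_CQI}. This is also why the result holds despite the per-antenna constraint lacking the symmetry that made $\Sigma^*_{\bx} = (P/n_t)\bI$ optimal under the total power constraint.
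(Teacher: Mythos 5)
Your proposal is correct and follows essentially the same route as the paper: both construct the candidate optimizer by raising each deficient diagonal entry of an existing optimizer up to $P_i$ (your $\sum_i \lambda_i \be_i\be_i^\dag$ is exactly that perturbation), observe that this preserves feasibility and positive semidefiniteness, and then invoke Property (I) of Lemma \ref{lem_mono} to conclude the new matrix attains the same optimal value. Your version is in fact slightly cleaner in one detail: you correctly only claim $\Sigma^*_{\bx}\succeq\Sigma^o_{\bx}$, which is all that Lemma \ref{lem_mono}(I) requires, whereas the paper asserts the strict ordering $\succ$ even though the difference matrix may have zero diagonal entries.
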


\begin{proof}
We will prove it by showing that if there exists an optimal
$\Sigma_{\bx}$ which does not meet \eqref{eq_per_equality}, one
can find another $\Sigma^*_\bx$ meeting \eqref{eq_per_equality}
and having the same optimal value. Suppose that there exists an
optimal $\Sigma_{\bx}$ maximizing
\eqref{eq_per_MIMOME_capacity_old} which has some diagonal terms
using powers smaller than their maximum available powers. We
collect the indexes corresponding to these diagonal terms as a
non-empty set $\mathds{I}_{s} \neq \varnothing$, i.e.,
$\{\Sigma_{\bx}\}_{ii}<P_i,$ if $ i \in \mathcal{I}_{s} \subseteq
\{1,\ldots, n_t\}$. Then we can form another $\Sigma^*_\bx$ as
\[
\{\Sigma^*_{\bx}\}_{ij}=\left\{
\begin{array}{ll}
P_i & \mbox{if $i=j$ and $i \in \mathds{I}_{s} $,} \\
\{\Sigma_{\bx}\}_{ij} & \mbox{otherwise.}
\end{array} \right.
\]
Then we know that $\Sigma^*_{\bx}-\Sigma_{\bx} \succ 0$, because
$\Sigma^*_{\bx}-\Sigma_{\bx}$ is a diagonal matrix with
$|\mathds{I}_{s}|$ positive diagonal terms, while the other
diagonal terms are zero. Note that the size of set
$\mathds{I}_{s}$ meets $|\mathds{I}_{s}|>0$ and at least one
diagonal terms of $\Sigma^*_{\bx}-\Sigma_{\bx}$ is positive. Then
from Property (I) of Lemma \ref{lem_mono} we know that for the
objective function $R_s(.)$ in \eqref{eq_per_MIMOME_capacity_old},
$R_s(\Sigma^*_{\bx}) \geq R_s(\Sigma_{\bx})$. Since $\Sigma_{\bx}$
is assumed to be optimal, $R_s(\Sigma_{\bx}) \geq
R_s(\Sigma^*_{\bx})$, and then we know that $R_s(\Sigma^*_{\bx}) =
R_s(\Sigma_{\bx})$.
\end{proof}

Although for general $n_r$, we only can obtain the above property
for the optimization problem in Proposition \ref{Lem_No_CQI}, due
to the aforementioned difficulty below Proposition
\ref{Lem_No_CQI}. For the special MISOSE case $n_r=n_e=1$, we can
have the following secrecy capacity result as in the upcoming
Theorem. The key proof steps come as follows. First, we use
\eqref{eq_per_MIMOME_capacity} in Proposition
\ref{Pro_MIMO_per_antenna} to transform the objective function in
\eqref{eq_per_MIMOME_capacity_old} into a simpler one as in the
upcoming \eqref{eq_Cs_optimizae}. And then we can apply
Proposition \ref{Pro_Full_power} and properties of
random Gaussian vectors to obtain analytical solutions. \\
\begin{Theo}\label{Theorem_Per}\it
Subject to the per-antenna power constraints $
\{\Sigma_{\bx}\}_{ii} \leq P_i, $ $i=1,\ldots,n_t$, the secrecy
capacity optimization problem \eqref{eq_per_MIMOME_capacity} has
optimal channel input covariance matrix
\begin{equation} \label{eq_per_optimal}
\Sigma^*_{\bx}=\mathrm{diag}\left(P_1,P_2,\ldots,P_{n_t}\right)
\end{equation}
for the MISOSE wiretap channel, where the legitimate receiver and
eavesdropper has single antenna each $n_r=n_e=1$. \\
\end{Theo}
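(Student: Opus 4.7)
My plan is to reduce the MISOSE secrecy rate expression to a scalar form, combine Proposition~\ref{Pro_Full_power} with the unitary invariance of i.i.d.\ Gaussians, and close the argument via Schur--Horn together with a Schur-concavity property.

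First, I would specialize \eqref{eq_per_MIMOME_capacity} from Proposition~\ref{Pro_MIMO_per_antenna} to $n_r=n_e=1$. Then the submatrix $\bH_{(n_r-n_e)}$ is the empty $0\times n_t$ matrix, so the term $\bH_{(n_r-n_e)}^\dag\bH_{(n_r-n_e)}$ vanishes and $\bG$ collapses to a $1\times n_t$ row vector. Applying Sylvester's determinant identity $|\bI+\bA\bB|=|\bI+\bB\bA|$ to both log-determinants reduces the objective to the scalar form
\[
R_s(\Sigma_\bx)=\E_{\bG}\!\left[\log\!\Big(1+\tfrac{\sigma_h^2}{\sigma_g^2}\,\bG\Sigma_\bx\bG^\dag\Big)-\log\!\big(1+\bG\Sigma_\bx\bG^\dag\big)\right],
\]
which should match the form labeled as \textit{eq\_Cs\_optimizae} in the paper.

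Next, I would exploit the unitary invariance of $\bG\sim CN(\mathbf{0},\sigma_g^2\bI)$. Writing the eigendecomposition $\Sigma_\bx=\bU\bLambda\bU^\dag$, the row vector $\bG\bU$ has the same law as $\bG$, so $\bG\Sigma_\bx\bG^\dag$ is distributed as $\sigma_g^2\sum_{i=1}^{n_t}\lambda_i|g_i|^2$ with i.i.d.\ $g_i\sim CN(0,1)$. Hence $R_s$ depends on $\Sigma_\bx$ only through its eigenvalue vector $\blambda$. Setting $c=\sigma_h^2/\sigma_g^2\ge 1$ and $\phi(x)=\log(1+cx)-\log(1+x)$, a direct computation shows $\phi'(x)=(c-1)/[(1+cx)(1+x)]\ge 0$ and $\phi''(x)\le 0$ on $x\ge 0$, so $\phi$ is concave and nondecreasing. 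Because $\sigma_g^2\sum_i\lambda_i|g_i|^2$ is linear in $\blambda$, the function $F(\blambda)\triangleq\E[\phi(\sigma_g^2\sum_i\lambda_i|g_i|^2)]$ is concave in $\blambda$; it is also symmetric in its arguments since the $g_i$ are exchangeable. A symmetric concave function is Schur-concave.

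Finally, I would invoke Proposition~\ref{Pro_Full_power} to fix the diagonal of some optimal $\Sigma_\bx^*$ at $(P_1,\ldots,P_{n_t})$, and then apply Schur--Horn: among all Hermitian positive semidefinite matrices with this prescribed diagonal, the eigenvalue vector $\blambda$ always majorizes $(P_1,\ldots,P_{n_t})$. Schur-concavity of $F$ then gives
\[
R_s(\Sigma_\bx)=F(\blambda)\le F(P_1,\ldots,P_{n_t})=R_s\!\big(\mathrm{diag}(P_1,\ldots,P_{n_t})\big),
\]
and the upper bound is attained by $\Sigma_\bx=\mathrm{diag}(P_1,\ldots,P_{n_t})$ itself, which is feasible and therefore optimal, proving \eqref{eq_per_optimal}.

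The step I expect to require the most care is the scalar reduction of the concave form \eqref{eq_per_MIMOME_capacity}: one must verify that the degenerate case $n_r-n_e=0$ is handled correctly so that the first log-determinant really collapses to $\log(1+(\sigma_h^2/\sigma_g^2)\bG\Sigma_\bx\bG^\dag)$, and that the sign condition $\sigma_h\ge\sigma_g$ (equivalently $c\ge 1$) is used exactly where concavity and monotonicity of $\phi$ are needed; these hypotheses align cleanly with the assumption of Proposition~\ref{Pro_MIMO_per_antenna}, but any weakening would break the Schur-concavity argument.
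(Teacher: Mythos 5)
Your proof is correct, but it follows a genuinely different route from the paper's. The paper performs the same scalar reduction to the form of \eqref{eq_Cs_optimizae}, fixes the diagonal via Proposition~\ref{Pro_Full_power}, and then kills the off-diagonal entries of $\Sigma_{\bx}$ directly: in Appendix~\ref{App_per_MISOSE_capcity} it exploits the sign-flip symmetry $g_i \to -g_i$ of the Gaussian entries to average two realizations of the objective, shows that any nonzero off-diagonal term strictly hurts (using $\beta\le 0$, $\beta+2\sigma>0$), and completes the argument by induction on $n_t$, one row/column at a time. You instead use rotational invariance of $\bg\sim CN(0,\sigma_g^2\bI)$ to show the objective depends on $\Sigma_{\bx}$ only through its eigenvalues, verify that $\phi(x)=\log(1+cx)-\log(1+x)$ is nondecreasing and concave precisely when $c=\sigma_h^2/\sigma_g^2\ge 1$ (the same place the paper needs $\beta\le0$), and then close with Schur's majorization theorem (diagonal $\prec$ eigenvalues) plus Schur-concavity of a symmetric concave function. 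All steps check out: the reduction of \eqref{eq_per_MIMOME_capacity} at $n_r=n_e=1$, the sign of $\phi''$, the direction of the majorization, and the final feasibility argument. Your approach buys generality and brevity --- notably, since the paper already establishes that the MIMOME objective \eqref{eq_per_MIMOME_capacity} is concave in $\Sigma_{\bx}$ and unitarily invariant under i.i.d.\ Rayleigh fading, your Schur--Horn argument would appear to extend verbatim to show that a diagonal $\Sigma_{\bx}$ is optimal for the general per-antenna-constrained MIMOME problem, which the paper explicitly declines to solve analytically and instead treats numerically; this is worth flagging as it suggests the paper's pessimism about the asymmetry of the constraint \eqref{eq_per_power_cons} can be circumvented. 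The paper's symmetrization-plus-induction proof, by contrast, is more elementary and self-contained, needing no majorization machinery, but it is tied to the rank-one (MISOSE) structure of the eavesdropper channel. One small remark: given the coordinatewise monotonicity of $F$ that you already establish via $\phi'\ge 0$, the appeal to Proposition~\ref{Pro_Full_power} is not strictly necessary --- you could bound $F(\bd)\le F(P_1,\ldots,P_{n_t})$ directly for any feasible diagonal $\bd$ --- but invoking it does no harm.
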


\begin{proof}
For this special case $n_r=n_e=1$, the sub-matrix of the
legitimate channel $\bH_{(n_r-n_e)}$ in
\eqref{eq_per_MIMOME_capacity} does not exist. Then the
optimization problem in \eqref{eq_per_MIMOME_capacity} can be
rewritten as
\begin{align}
&\max_{\Sigma_{\bx}}\!\!\left(\!\!\!\E_{\mathbf{g}}\!\left[\log\frac{\sigma_{\mathbf{g}}^2/\sigma_{\mathbf{h}}^2+\mathbf{g}^\dag\Sigma_{\mathbf{x}}\mathbf{g}}{\sigma_{\mathbf{g}}^2/\sigma_{\mathbf{h}}^2}\right]\!-\!\E_{\mathbf{g}}\!\left[\log(1+\mathbf{g}^\dag\Sigma_{\mathbf{x}}\mathbf{g})\right]\!\!\!\right),
\notag \\
=&\max_{\Sigma_{\bx}}\!\!\left(\!\!\!\E_{\mathbf{g}}\!\left[\log\left(1+\frac{\sigma_{\mathbf{g}}^2/\sigma_{\mathbf{h}}^2-1}{1+\mathbf{g}^\dag\Sigma_{\mathbf{x}}\mathbf{g}}\right)\right]\right)+\!\log\left(\sigma_{\mathbf{h}}^2/\sigma_{\mathbf{g}}^2\right),
\label{eq_Cs_optimizae}
\end{align}
where for clearness, we replace the notation $\bG$ in
\eqref{eq_per_MIMOME_capacity} with $\mathbf{g}^\dag\sim CN(0,
\sigma_{\bg}^2 \mathbf{I})$ to emphasize that the eavesdropper
channel is now a vector when $n_r=1$. Note that the objective
function in \eqref{eq_Cs_optimizae} is only related to the channel
vector $\bg$ for the eavesdropper, but not the channel vector for
the legitimate receiver. The above transformations make finding
analytical solutions possible.

Next, from Proposition \ref{Pro_Full_power}, without loss of
optimality, one can replace the inequality constraint
\eqref{eq_per_power_cons} by the equality constraint on the
diagonal entries of $\Sigma_{\bx}$ as
\begin{equation} \label{eq_per_equality_cons}
\left\{\Sigma_{\bx}\right\}_{ii}= P_i,
\end{equation} for
$i=1,\ldots, n_t$. Now we show that for any $\beta \leq 0$,
$\sigma > 0$, $\beta +2\sigma > 0$, subject to equality constraint
\eqref{eq_per_equality_cons}, the optimization problem
\begin{equation} \label{eq_per_MISOSE_Cs_optimizae}
\max_{\Sigma_{\bx}}\!\!\left(\!\!\!\E_{\mathbf{g}}\!\left[\log\left(1+\frac{\beta}{\sigma+\mathbf{g}^\dag\Sigma_{\mathbf{x}}\mathbf{g}}\right)\right]\right)
\end{equation}
has optimal solution \eqref{eq_per_optimal}. This fact is proved
in Appendix \ref{App_per_MISOSE_capcity}, where the key step is
cleverly applying properties of random Gaussian vectors. Note that
the objective function in \eqref{eq_Cs_optimizae} is a special
case of that in \eqref{eq_per_MISOSE_Cs_optimizae} with
$\beta=\sigma_{\mathbf{g}}^2/\sigma_{\mathbf{h}}^2-1$ and
$\sigma=1$. And then subject to \eqref{eq_per_power_cons}, the
optimal $\Sigma_{\bx}$ maximizing \eqref{eq_per_MIMOME_capacity}
(and the the original \eqref{eq_per_MIMOME_capacity_old}) with
$n_r=n_e=1$ is \eqref{eq_per_optimal}. Our claim in Theorem
\ref{Theorem_Per} is valid.
\end{proof}

Here we investigate the SNR scaling of the MISOSE channel under
the per-antenna power constrains. From Theorem \ref{Theorem_Per},
we have
\begin{coro} \label{coro_per_SNR_MISOSE}
Under the per-antenna power constraint $ \{\Sigma_{\bx}\}_{ii}
\leq P_i, $ $i=1,\ldots,n_t$, when $n_r=n_e=1$ and $\sigma_h \geq
\sigma_g > 0$, we have the following asymptotic results for the
MISOSE secrecy capacity $C^p_s$ as
\begin{equation} \label{eq_per_DoF_MISOSE}
\lim_{P^g_{max} \rightarrow
\infty}\frac{C^p_s-\log(\sigma^2_h/\sigma^2_g)}{\log P^g_{max}}=0,
\end{equation}
where $P^g_{max}=\sigma^2_g P_{max}$ with
\begin{equation} \label{eq_per_Pmax}
P_{max} \triangleq \max_{i=1,\ldots n_t}P_i
\end{equation}
being the maximum among per-antenna power constraints. \\
\end{coro}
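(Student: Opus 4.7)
The plan is to combine the explicit optimal input from Theorem \ref{Theorem_Per} with the simplified single-receive-antenna form of the secrecy rate in \eqref{eq_Cs_optimizae}. Substituting the diagonal optimizer $\Sigma^*_{\bx}=\mathrm{diag}(P_1,\ldots,P_{n_t})$ and writing $\bg=(g_1,\ldots,g_{n_t})^T$, the quadratic form becomes $\bg^\dag\Sigma^*_{\bx}\bg=\sum_{i=1}^{n_t}P_i|g_i|^2$, so
\begin{equation*}
C^p_s=\log\!\left(\frac{\sigma_{\bh}^2}{\sigma_{\bg}^2}\right)+\E_{\bg}\!\left[\log\!\left(1+\frac{\sigma_{\bg}^2/\sigma_{\bh}^2-1}{1+\sum_{i=1}^{n_t}P_i|g_i|^2}\right)\right].
\end{equation*}
So the corollary reduces to showing that the expectation on the right, call it $T(P_1,\ldots,P_{n_t})$, grows sublinearly in $\log P^g_{max}$.

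Next I would observe that $T$ is in fact uniformly bounded in the $P_i$'s, using only the sign assumption $\sigma_{\bh}\ge\sigma_{\bg}$. Since $\sigma_{\bg}^2/\sigma_{\bh}^2-1\le 0$ and $1+\sum_i P_i|g_i|^2\ge 1$, the argument of the logarithm lies in the interval $[\sigma_{\bg}^2/\sigma_{\bh}^2,\,1]$ almost surely, which gives the deterministic pointwise bound
\begin{equation*}
\log\!\left(\frac{\sigma_{\bg}^2}{\sigma_{\bh}^2}\right)\le \log\!\left(1+\frac{\sigma_{\bg}^2/\sigma_{\bh}^2-1}{1+\sum_{i=1}^{n_t}P_i|g_i|^2}\right)\le 0,
\end{equation*}
and hence after taking expectation, $\log(\sigma_{\bg}^2/\sigma_{\bh}^2)\le T\le 0$, independent of the $P_i$'s.

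The conclusion then follows immediately: dividing the identity for $C^p_s$ through by $\log P^g_{max}=\log(\sigma_{\bg}^2 P_{max})$ and letting $P_{max}\to\infty$, the $\log(\sigma_{\bh}^2/\sigma_{\bg}^2)$ on both sides cancels and the bounded residual $T/\log P^g_{max}$ tends to zero, establishing \eqref{eq_per_DoF_MISOSE}. No analytic control on the tail behavior of the Gaussian $\bg$ is needed for the bare scaling statement; I would only invoke dominated convergence (with dominating function $|\log(\sigma_{\bg}^2/\sigma_{\bh}^2)|$) if one wanted the sharper claim that $T\to 0$ itself, which is not required here.

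The only mildly delicate point is handling the case $\sigma_{\bh}=\sigma_{\bg}$, where the numerator in the residual vanishes, making $T\equiv 0$ and the claim trivial; and the case $\sigma_{\bh}>\sigma_{\bg}$, where the uniform two-sided bound above does the work. Thus the main obstacle is really just the algebraic step of reducing \eqref{eq_Cs_optimizae} to a form in which the SNR-dependent contribution is manifestly bounded; once the optimizer from Theorem \ref{Theorem_Per} is in hand, the scaling statement is a one-line consequence of the monotone negativity of the correction term.
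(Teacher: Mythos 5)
Your proof is correct, but it takes a genuinely different route from the paper. The paper's proof never touches the closed-form expression \eqref{eq_Cs_optimizae}: it writes $C^p_s=R_s(\mathrm{diag}\{P_1,\ldots,P_{n_t}\})$ via Theorem \ref{Theorem_Per}, upper-bounds this by $R_s(\mathrm{diag}\{P_{max},\ldots,P_{max}\})$ using the matrix-ordering monotonicity of Lemma \ref{lem_mono} (Property I), identifies the latter with the total-power secrecy capacity at power $n_tP_{max}$ via Theorem \ref{Theorem_No_CQI}, and then invokes the asymptotics of Corollary \ref{coro_total_SNR} with $n_r=n_e=1$. You instead substitute the optimizer directly into \eqref{eq_Cs_optimizae} and observe that the residual expectation is pointwise (hence uniformly in the $P_i$'s) trapped in $[\log(\sigma_{\bg}^2/\sigma_{\bh}^2),\,0]$, so dividing by $\log P^g_{max}$ kills it. Your route is more elementary and self-contained, and it actually buys a stronger conclusion --- the uniform two-sided bound $0\le C^p_s\le\log(\sigma_h^2/\sigma_g^2)$ for \emph{all} power profiles, not just the asymptotic slope. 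It also sidesteps a small wrinkle in the paper's argument: the case list in Corollary \ref{coro_total_SNR} literally covers $n_r=n_e=1$ only when $n_t\le n_e$, so for $n_t\ge 2$ the paper is implicitly relying on the (true but unstated) fact that the degrees of freedom $\min(n_t,n_r)-\min(n_t,n_e)$ vanish whenever $n_r=n_e$; your computation needs no such extension. One minor remark: your final sentence about dominated convergence is unnecessary even for the remark it supports, since the bound is deterministic, but this does not affect correctness.
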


\begin{proof}
As \eqref{eq_capacity_object}, let us denote the objective
function of optimization problem
\eqref{eq_per_MIMOME_capacity_old} as $R_s(\Sigma_{\bx})$. From
the Property (I) of Lemma \ref{lem_mono}, we know that the MISOSE
secrecy capacity under the per-antenna power constraint
$C^p_s=R_s(diag\{P_1,\ldots,P_{n_t}\}) \leq
R_s(diag\{P_{max},\ldots,P_{max}\})$. This fact is due to that
\[
diag\{P_{max},\ldots,P_{max}\}\succeq
diag\{P_1,\ldots,P_{n_t}\}\succeq \mathbf{0},
\]
because
$diag\{P_{max},\ldots,P_{max}\}-diag\{P_1,\ldots,P_{n_t}\}$ is a
diagonal matrix with non-negative entries from
\eqref{eq_per_Pmax}. By Theorem \ref{Theorem_No_CQI} we know that
$R_s(diag\{P_{max},\ldots,P_{max}\})$ is the MISOSE secrecy
capacity under total power constraint $\mathrm{Tr}(\Sigma_{\bx})
\leq n_tP_{max}$. Then by Corollary \ref{coro_total_SNR}, for
fixed $n_t$, \eqref{eq_per_DoF_MISOSE} is valid since $n_r=n_e=1$.
\end{proof}

From Corollary \ref{coro_per_SNR_MISOSE}, we know that the
multiple transmitter antennas for the MISOSE channel limitedly
help to increase the secrecy capacity under the per-antenna power
constraint. Indeed, note that \eqref{eq_per_DoF_MISOSE} is hold
even when only one transmitter antenna, which has the maximum
power constraint, is selected to transmit secret messages while
the other ones are silent. In this case, the per-antenna power
constraint corresponding to \eqref{eq_per_power_cons} are
$P_i=P_{max}$ and $P_j=0, j \neq i, \forall j$, where the $i$th
transmitter antenna is selected. When SNR is large enough, the
secrecy rate for this simple transmitter antenna selection scheme
\cite{molisch2004mimo} is within constant gap compared with the
secrecy capacity where all transmitter antennas are used.

To overcome the negative results for MISOSE channels revealed in
Corollary \ref{coro_per_SNR_MISOSE}, as in Section
\ref{Sec_total_power}, we now show that making the difference
between numbers of antennas at the legitimate receiver and
eavesdropper $n_r-n_e>0$ will help to increase the secrecy
capacity under the per-antenna power constraint. Although we
cannot find the optimal $\Sigma_{\bx}$ for
\eqref{eq_per_MIMOME_capacity_old} subject to
\eqref{eq_per_power_cons}, we use the following sub-optimal
$\Sigma_{\bx}$ to compute the secrecy capacity lower bound as
$\Sigma_{\bx}=\mathrm{diag}\left(P_1,P_2,\ldots,P_{n_t}\right)$.
This selection of $\Sigma_{\bx}$ is optimal for the MISOSE case
from Theorem \ref{Theorem_Per}, and for the MIMOME case, this
selection is also reasonable due to Proposition
\ref{Pro_Full_power}. Now we have the following result for the
secrecy capacity in Proposition \ref{Pro_MIMO_per_antenna} as
\begin{coro} \label{coro_per_SNR}
Under the per-antenna power constraint $ \{\Sigma_{\bx}\}_{ii}
\leq P_i, $ $i=1,\ldots,n_t$, when $\sigma_h \geq \sigma_g>0$, we
have the following asymptotic result for the MIMOME secrecy
capacity $C^p_s$ as
\begin{equation} \label{eq_per_DoF_MIMOME}
\lim_{P^g_{min} \rightarrow
\infty}\frac{C^p_s-n_r\log(\sigma^2_h/\sigma^2_g)}{\log
P^g_{min}} \geq \left\{\begin{array}{lc} n_r-n_e, & n_t \geq n_r > n_e\\
n_t-n_e, & n_r \geq n_t > n_e,
\end{array}\right.
\end{equation}
where $P^g_{min}=\sigma^2_g P_{min}$ with
\begin{equation}\label{eq_Pmin}
P_{min} \triangleq \min_{i=1,\ldots n_t}P_i
\end{equation}
is the minimum among per-antenna power constraints.
\end{coro}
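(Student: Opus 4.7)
The plan is to sidestep the open problem of finding the optimal $\Sigma_{\bx}$ under the per-antenna constraint by using the suboptimal choice $\Sigma_{\bx}=\mathrm{diag}(P_1,\ldots,P_{n_t})$ (which, as noted in the paragraph preceding the statement, is feasible for \eqref{eq_per_power_cons} and is motivated by Proposition \ref{Pro_Full_power}). This immediately produces the lower bound $C^p_s\geq R_s(\mathrm{diag}(P_1,\ldots,P_{n_t}))$, where $R_s(\cdot)$ is the objective in \eqref{eq_capacity_object}. The goal from here is to reduce this to a quantity already controlled by Corollary \ref{coro_total_SNR}.

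The key step is a monotonicity reduction from the heterogeneous diagonal to a scalar multiple of the identity. Since $P_i\geq P_{min}$ for every $i$, one has the positive semidefinite ordering
\[
\mathrm{diag}(P_1,\ldots,P_{n_t})\succeq P_{min}\bI\succ\mathbf{0}.
\]
Because the hypotheses in the two cases listed guarantee $n_r\geq n_e$, and because $\sigma_h\geq\sigma_g$ is assumed, Property (I) of Lemma \ref{lem_mono} applies and gives $R_s(\mathrm{diag}(P_1,\ldots,P_{n_t}))\geq R_s(P_{min}\bI)$. Now $P_{min}\bI=(n_tP_{min}/n_t)\bI$ is precisely the optimizer \eqref{eq_total_cons_optimal} identified in Theorem \ref{Theorem_No_CQI} for the total-power problem with budget $n_tP_{min}$, so $R_s(P_{min}\bI)$ coincides with the total-power secrecy capacity $C^t_s$ evaluated at $P=n_tP_{min}$.

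The last step is then to apply Corollary \ref{coro_total_SNR} with $P$ replaced by $n_tP_{min}$. The equivalent eavesdropper SNR becomes $\sigma_g^2 n_tP_{min}=n_t P^g_{min}$, which differs from $P^g_{min}$ only by the multiplicative constant $n_t$; under the logarithm this contributes an $O(1)$ offset that is annihilated by the division by $\log P^g_{min}$ in the limit. Hence the exponents $n_r-n_e$ (for $n_t\geq n_r>n_e$) and $n_t-n_e$ (for $n_r\geq n_t>n_e$) in Corollary \ref{coro_total_SNR} carry over verbatim to give \eqref{eq_per_DoF_MIMOME}. I do not expect a genuine obstacle: the whole argument is a monotonicity-plus-reduction chain, and essentially all of the analytical work has already been done in Lemma \ref{lem_mono}(I), Theorem \ref{Theorem_No_CQI}, and Corollary \ref{coro_total_SNR}.
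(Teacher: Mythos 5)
Your proposal is correct and follows essentially the same route as the paper's own proof: lower-bound $C^p_s$ by $R_s(\mathrm{diag}(P_1,\ldots,P_{n_t}))$, reduce to $R_s(P_{min}\bI)$ via the ordering $\mathrm{diag}(P_1,\ldots,P_{n_t})\succeq P_{min}\bI\succ\mathbf{0}$ and Lemma \ref{lem_mono}(I), identify this with the total-power capacity at budget $n_tP_{min}$ via Theorem \ref{Theorem_No_CQI}, and invoke Corollary \ref{coro_total_SNR}. Your explicit remark that the factor $n_t$ only contributes an $O(1)$ offset absorbed by the $\log P^g_{min}$ normalization is a small but welcome clarification the paper leaves implicit.
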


\begin{proof}
From the definition of $P_{min}$ in \eqref{eq_Pmin}, we know that
\[
diag\{P_1,\ldots,P_{n_t}\} \succeq diag\{P_{min},\ldots,P_{min}\}
\succ \mathbf{0}.
\]
Then we know that $C^p_s \geq R_s(diag\{P_1,\ldots,P_{n_t}\}) \geq
R_s(diag\{P_{min},\ldots,P_{min}\})$, where the second inequality
comes from Property (I) of Lemma \ref{lem_mono}. By Theorem
\ref{Theorem_No_CQI} we know that
$R_s(diag\{P_{min},\ldots,P_{min}\})$ is the MIMOME secrecy
capacity under total power constraint $\mathrm{Tr}(\Sigma_{\bx})
\leq n_tP_{min}$. Then by Corollary \ref{coro_total_SNR}, for
fixed $n_t$, \eqref{eq_per_DoF_MIMOME} is valid.
\end{proof}

From \eqref{eq_per_DoF_MIMOME}, as discussions under Corollary
\ref{coro_total_SNR}, adding enough number of legitimate-receiver
antennas $n_r$ (also enough number of transmit antennas $n_t$) is
very important for increasing the secrecy capacity $C^p_s$.
Moreover, \eqref{eq_per_DoF_MIMOME} also reveals some antenna
selection rules for MIMOME wiretap channel under the per-antenna
power constraint. For example, when $n_t
> n_r
> n_e$, the transmitter can select the largest
$n_r$ antennas to transmit the secret message. Each selected
antenna transmits with all its allowable power. When the allowable
power of all selected antenna is high enough, this simple transmit
antenna selection scheme can achieve good secrecy rate performance
according to \eqref{eq_per_DoF_MIMOME}.

Finally, we develop algorithms to solve the MIMOME secrecy
capacity optimization problems \eqref{eq_per_MIMOME_capacity_old}
subject to the per-antenna power constraint
\eqref{eq_per_power_cons}. To do this, as in Proposition
\ref{Pro_MIMO_per_antenna}, by using the same marginal channel
matrix $\bH'$ in \eqref{eq_H_divide}, we first transfer the
stochastic optimization problem \eqref{eq_per_MIMOME_capacity_old}
into a concave one \eqref{eq_per_MIMOME_capacity}. Next, from
Proposition \ref{Pro_Full_power}, one can set the inequalities in
constraints \eqref{eq_per_power_cons} with equalities as
\eqref{eq_per_equality} to further simplify the problem. Then we
can use algorithms similar to those in \cite{huh2010multiuser} to
find the optimal values for \eqref{eq_per_MIMOME_capacity_old} by
matrix calculus
\cite{hjorungnes2011complex}\cite{payaro2009hessian}. The gap to
the optimal value is within $n_t/t$, where $n_t$ is the number of
transmit antenna and $t>0$ is a parameter to prevent the algorithm
approaching non-semidefine $\Sigma_{\bx}$. Note that our problem
can not be simplified by the
multiple-access-channels-broadcast-channels duality as
\cite{huh2010multiuser}. Though wiretap channels are similar to
the broadcast channels, there may be no corresponding dualities
for the wiretap channels.

\begin{figure*}[t]
\begin{align}
\nabla_x
f_t=\E_{\mathbf{H}'}\bigg[\big((\textbf{H}^{'})^\dagger\otimes(\textbf{H}^{'})^T\big)vec&\bigg(\big((\textbf{I}+\textbf{H}^{'}\Sigma_{\bx}(\textbf{H}^{'})^\dagger)^{-1}\big)^T\bigg)-\big(\bG^\dagger\otimes\bG^T\big)vec\bigg(\big((\textbf{I}+\bG\Sigma_{\bx}\bG^\dagger)^{-1}\big)^T\bigg)\bigg]+\frac{1}{t}vec\left(\left({\Sigma_{\bx}}^{-1}\right)^T\right)\label{eq_graft}
\end{align}

\begin{align}
\nabla^2_{xx}f_t
&=\E_{\mathbf{H}'}~\bigg[-((\textbf{H}^{'})^\dagger\otimes(\textbf{H}^{'})^T)\bigg(\big(\textbf{I}+\textbf{H}^{'}\Sigma_{\bx}(\textbf{H}^{'})^\dagger)\otimes\big((\textbf{I}+\textbf{H}^{'}\Sigma_{\bx}(\textbf{H}^{'})^\dagger)^{T}\bigg)^{-1}\cdot
\mathbf{K}_p
\bigg(((\textbf{H}^{'})^\dagger)^T\otimes\textbf{H}^{'}\bigg) \nonumber \\
&~~~~~~~~~~~+(\bG^\dagger\otimes\bG^T)\bigg(\big(\textbf{I}+\bG\Sigma_{\bx}\bG^\dagger)\otimes\big(\textbf{I}+\bG\Sigma_{\bx}\bG^\dagger)^{T}\bigg)^{-1}\cdot
\mathbf{K}_p
\bigg((\bG^\dagger)^T\otimes\bG\bigg)\bigg]-\frac{1}{t}(\Sigma_{\bx}\otimes\Sigma_{\bx}^T)^{-1}\mathbf{K}_p
\label{eq_gragraft}
\end{align}

\hrulefill
\end{figure*}

Here comes the reformulation of the secrecy capacity optimization
problem \eqref{eq_per_MIMOME_capacity_old} subject to
\eqref{eq_per_power_cons} for our algorithm. First, we use the
following objective function as
\begin{equation} \label{eq_barrieried_objective}
f_t(\Sigma_{\bx})=\tilde{R}_s(\Sigma_{\bx})+\frac{1}{t}\log|\Sigma_{\bx}|,
\end{equation}
where $\tilde{R}_s(.)$ is the concave objective function in
\eqref{eq_per_MIMOME_capacity} and $\frac{1}{t}\log|\Sigma_{\bx}|$
is the logarithmic barrier. As for constraint
\eqref{eq_per_equality} (per-antenna power constraint
\eqref{eq_per_power_cons} with equality), we also rewrite it as
the following equality constraint
\begin{equation} \label{eq_Alg_constraint}
\mathbf{A}\cdot \emph{vec}(\Sigma_{\bx}) =
[~P_1~\ldots~P_{n_t}]^T,
\end{equation}
where the $n_t \times (n_t)^2$ matrix $\mathbf{A}$ has entries
being $1$ for those corresponds to $\{\Sigma_{\bx}\}_{ii}, i=1
\ldots n_t$, and entries being $0$ else, and the vectorization
operator for a matrix $\emph{vec}(.)$ is defined as
\cite{hjorungnes2011complex}\cite{payaro2009hessian}. For example,
when $n_t=2$
\[\mathbf{A}=
\left[\begin{array}{cccc}
1 & 0 & 0 & 0\\
0 & 0 & 0 & 1
\end{array}\right].
\]

Now we focus on the detailed steps for the optimization of
$f_t(\Sigma_{\bx})$ in \eqref{eq_barrieried_objective} subject to
equality constraint \eqref{eq_Alg_constraint}. In the following,
the $\nu$ is the $n_t \times 1$ Lagrange multiplier associated to
equality constraint \eqref{eq_Alg_constraint}.

\begin{itemize}

\item[1)] Initialize $\Sigma^{(0)}_{\bx}$ and $\nu^{(0)}$

\item[2)] By invoking the matrix calculus
\cite{hjorungnes2011complex}\cite{payaro2009hessian}, compute the
residue as
\[
\br= \left[\begin{array}{c}
\nabla_xf_t+{\textbf{A}}^T\nu\\
\textbf{A}\cdot \emph{vec}(\Sigma_{\bx})-[~P_1~\ldots~P_{n_t}]^T
\end{array}\right]
\]
where $\nabla_x$ means the gradient of function respect to vector
$\emph{vec}(\Sigma_{\bx})$, and $\nabla_xf_t$ is given in
\eqref{eq_graft} with $\bH'$ from \eqref{eq_H_divide} and
$\otimes$ being the Kronecker product. The direction
$[\Delta_{x},\Delta_{\nu}]^T$ to update
$[\emph{vec}(\Sigma_{\bx}),\nu]^T$ is computed by the KKT matrix
\cite{huh2010multiuser} as
\[
\left[\begin{array}{c}
\Delta_{x}\\
\Delta_\nu
\end{array}\right]
=- \left[\begin{array}{cc}
\nabla_{xx}^2f_t & \textbf{A}^T\\
\textbf{A} & 0
\end{array}\right]^{-1} \br,
\]
where $\nabla_{xx}^2f_t$ is explicitly given in
\eqref{eq_gragraft}, with $\mathbf{K}_p$ being the permutation
matrix such that $vec(\mathbf{A}^T ) = \mathbf{K}_p
vec(\mathbf{A})$ as defined in \cite[(51)]{payaro2009hessian}.

\item[3)] For the $n+1$ step, we compute
$\left(\emph{vec}\left(\Sigma_{\bx}^{(n+1)}\right),\nu^{(n+1)}\right)
=
\left(\emph{vec}\left(\Sigma_{\bx}^{(n)}\right)+s\Delta_x,\nu^{(n)}+s\Delta\nu\right)$,
where $s$ is the step size found by the backtracking line search.

\item[4)] If the norm of residue
$\|\br\|_{_2}<\varepsilon$,~increase $t$ by a factor $\gamma=1.5$
and go to step 5. If not,~go back to step 2.

\item[5)] Stop when the gap $\frac{n_t}{t}<\varepsilon$,~if
not,~go back to step 2.
\end{itemize}

Note that steps 2 to 4 are the infeasible start Newton steps, and
the algorithm will converge since objective function $f_t(.)$ in
\eqref{eq_barrieried_objective} is concave (both $\tilde{R}_s(.)$
from \eqref{eq_per_MIMOME_capacity} and the logarithmic barrier
are concave) \cite{huh2010multiuser}. And following
\cite{huh2010multiuser}, one can show that the gap to the optimal
value of \eqref{eq_per_MIMOME_capacity_old} is $\frac{n_t}{t}$ as
in step 5, because the dimension of $\Sigma_{\bx}$ in the
logarithmic barrier $\frac{1}{t}\log|\Sigma_{\bx}|$ is $n_t \times
n_t$.

Compared with the algorithm in \cite{li2010optimal} which also
considers the per-antenna power constraint, our algorithm can
guarantee that the gap to the optimal value is vanishing when $t
\rightarrow \infty$ while that in \cite{li2010optimal} can not
guarantee such an optimality. Moreover, \cite{li2010optimal}
requires full CSIT of both legitimate channel and eavesdropper
while our statistical CSIT requirement is more practical.

\section{Simulations}\label{Sec_simulation}
In this section, we provide numerical and simulation results for
our theoretical claims in previous sections. In all figures
presented, noise at the legitimate receiver and eavesdropper has
unit variance each. The SNR is then defined as the total
transmitted power over all antennas in the dB scale. The ratio of
the legitimate channel variance over the eavesdropper channel
variance in \eqref{eq_Rayleigh} is $\sigma^2_h/\sigma^2_g=4$. All
channel matrices are Rayleigh faded.

First, in Fig. \ref{fig_capacity_different_P}-\ref{fig_alpha}, we
show the numerical results for secrecy capacities $C^{\;t}_s$
under the total power constraints \eqref{power_cons1} in Theorem
\ref{Theorem_No_CQI}. In Fig. \ref{fig_capacity_different_P}, we
compare the MIMOME secrecy capacities $C^{\;t}_s$ under different
combinations of number of antennas. Three different combinations
$(n_t,n_r,n_e)$ with fixed $n_t=4$ and $n_t \geq n_r$ are
considered, where $n_t,n_r$ and $n_e$ respectively are the number
of antennas at the transmitter, the legitimate receiver, and the
eavesdropper. Consistent with \cite{Secrecy_sc}, when $n_r=n_e=1$,
the MISOSE secrecy capacities do not scale with SNR and converges
at high SNR. Same phenomenon also happens even when $n_r=n_e=2$.
Thus it may be a waste of resource by increasing the SNR for the
MIMOME channel with $n_r=n_e$. To overcome this drawback, one can
increase $n_r$ to make it larger than $n_e$. These observations
meet our results in Corollary \ref{coro_total_SNR}. Also with
fixed number of transmit antennas $n_t>n_e$, increasing the
difference $n_r-n_e$ is very helpful to increase the secrecy
capacities in all SNR regimes. Similarly, as in Fig.
\ref{fig_capacity_different_P}, in Fig.
\ref{fig_capacity_different_n_t} we compare the MIMOME secrecy
capacities $C^{\;t}_s$ in Theorem \ref{Theorem_No_CQI} but with
fixed $n_r=5$ and $n_r \geq n_t$. The MISOSE secrecy capacities
with $(n_t,n_r,n_e)=(5,1,1)$ are also depicted in Fig
\ref{fig_capacity_different_n_t}, and again they do not scale with
the SNR. The SNR scaling in MIMOME channels can be obtained when
$n_r
> n_e$ but not $n_r=n_e$. And with fixed $n_r>n_e$ (but not $n_r=n_e$), increasing the
difference $n_t-n_e$ is very helpful to increase the secrecy
capacities even at medium SNR regimes.

Next, we consider the MIMOME secrecy capacities $C^{\;t}_s$ in
Theorem \ref{Theorem_No_CQI} with single transmit antenna in Fig.
\ref{fig_capacity_different_n_e}. As predicted in Corollary
\ref{coro_total_SNR}, when $n_t=1$ there will be no SNR scaling no
matter $n_r>n_e$ or $n_r=n_e$. However, one can observe that
channels with larger $n_r-n_e$ (but not larger $n_r$) will have
higher secrecy capacities $C^{\;t}_s$. Next, in Fig.
\ref{fig_alpha}, we show that all available transmit power $P$
should be used for MIMOME channels with total power constraints
\eqref{power_cons1}. We plot the $R_s(\alpha \mathbf{I})$s in
\eqref{eq_capacity_object}, the secrecy rates with channel input
matrices $\Sigma_{\bx}=\alpha \bI$. The maximum allowable transmit
power $P=20$ dB and $0 \leq \alpha \leq P/n_t$. From
Fig.\ref{fig_alpha}, the secrecy rate $R_s(\alpha \mathbf{I})$
increases monotonically with $\alpha$. Thus when $\alpha$ equals
to its maximum $P/n_t$, the secrecy rate $R_s(P/n_t \mathbf{I})$
is the secrecy capacity, as in the proof of Theorem
\ref{Theorem_No_CQI}.

\begin{figure} [t]
\centering
\epsfig{file=./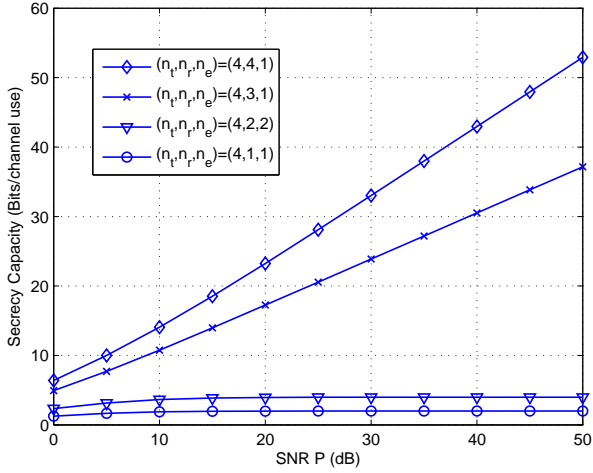,width=0.5\textwidth}
\caption{Under total power constraints over all transmit antennas
\eqref{power_cons1}, the secrecy capacities versus SNRs with
different number of antennas $(n_t \geq n_r)$.}
\label{fig_capacity_different_P}
\end{figure}
\begin{figure} [t]
\centering \epsfig{file=./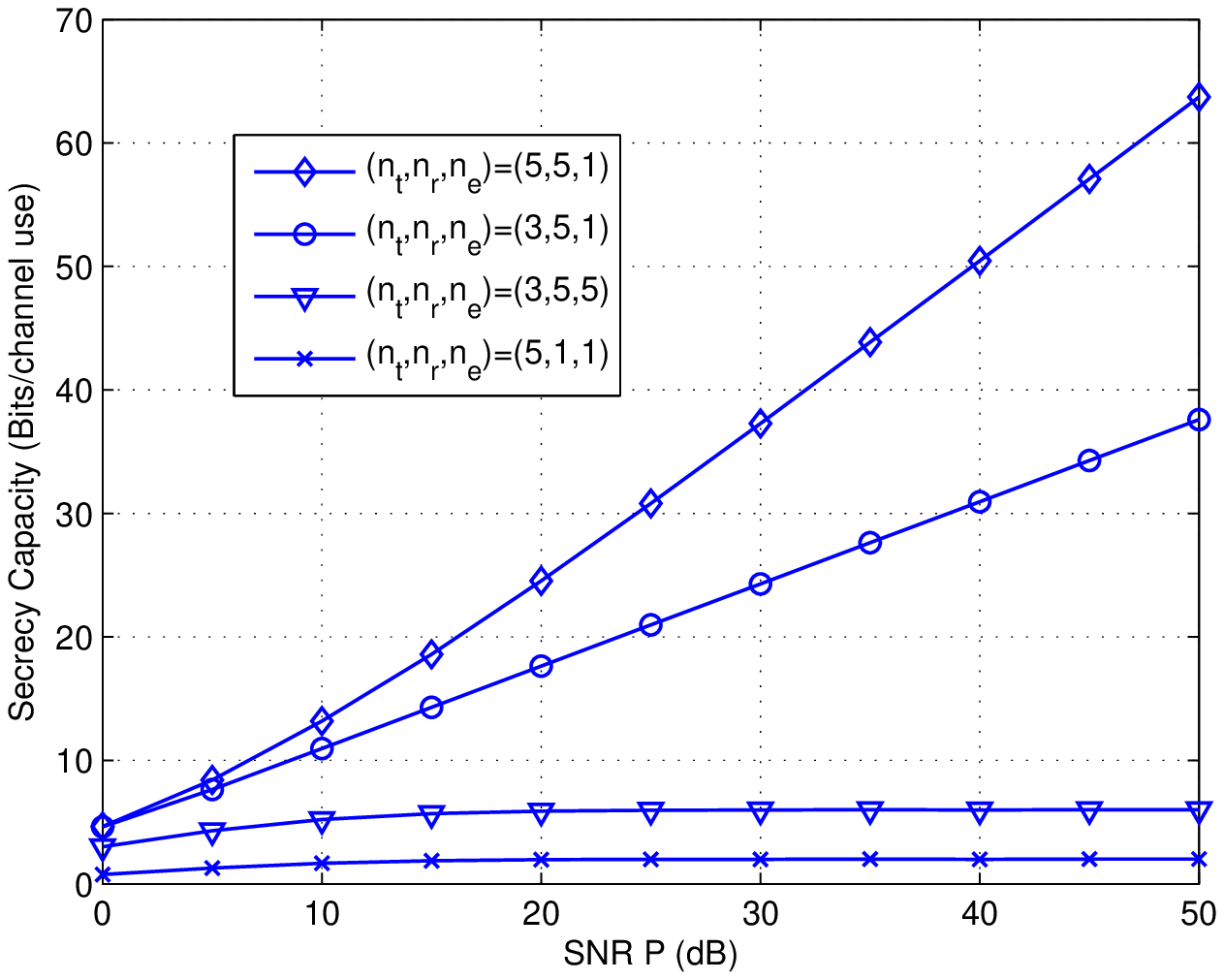,
width=0.5\textwidth} \caption{Under total power constraints over
all transmit antennas \eqref{power_cons1}, the secrecy capacities
versus SNRs with different number of antennas $(n_r \geq n_t)$.}
\label{fig_capacity_different_n_t}
\end{figure}
\begin{figure} [t]
\centering \epsfig{file=./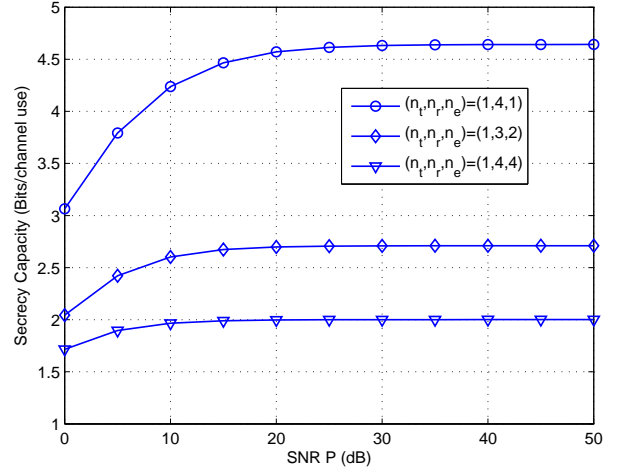,
width=0.5\textwidth} \caption{Under total power constraints over
all transmit antennas \eqref{power_cons1}, the secrecy capacities
versus SNRs with single transmit antenna $(n_t=1)$.}
\label{fig_capacity_different_n_e}
\end{figure}

\begin{figure} [t]
\centering \epsfig{file=./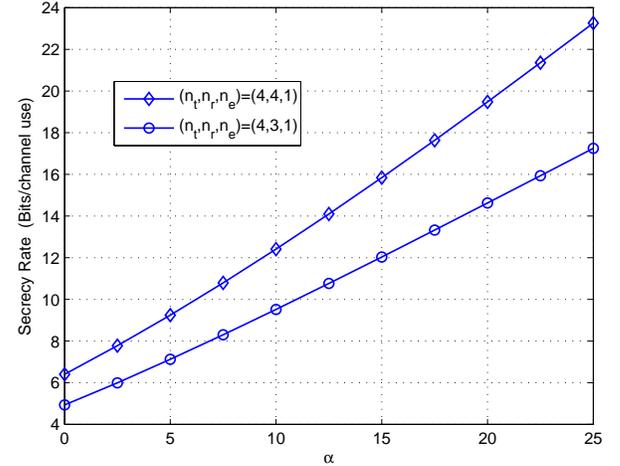, width=0.5\textwidth}
\caption{Under total power constraints over all transmit antennas
\eqref{power_cons1}, the MIMOME secrecy rates $R_s(\alpha
\mathbf{I})$ in \eqref{eq_capacity_object} (with channel input
matrix $\Sigma_{\bx}=\alpha \bI$) versus $\alpha$ where $0 \leq
\alpha \leq P/n_t$.} \label{fig_alpha}
\end{figure}

In Fig. \ref{fig_MIMOME_vs_MISOSE} and
\ref{fig_linesearch_interation_times}, we studies the wiretap
channels under the per-antenna power constraints
\eqref{eq_per_power_cons}. In Fig.\ref{fig_MIMOME_vs_MISOSE}, the
secrecy rates of wiretap channels with $(n_t,n_r,n_e)=(2,2,1)$
(the objection function of \eqref{eq_per_MIMOME_capacity_old} with
channel input covariance matrices $\Sigma_{\bx}=diag(P_1,P_2)$)
are compared with the secrecy capacities $C^{p}_s$ of wiretap
channels with $(n_t,n_r,n_e)=(2,1,1)$ (characterized in Theorem
\ref{Theorem_Per}). As predicted in Corollary
\ref{coro_per_SNR_MISOSE}, the MISOSE channels under constraints
\eqref{eq_per_power_cons} cannot have SNR scaling. However, as
stated in Corollary \ref{coro_per_SNR}, the SNR scaling can be
obtained for the wiretap channel with $(n_t,n_r,n_e)=(2,2,1)$ even
we use the suboptimal $\Sigma_{\bx}$ to compute the secrecy rate
in Fig.\ref{fig_MIMOME_vs_MISOSE}. In
Fig.\ref{fig_linesearch_interation_times}, we use the algorithm in
Section \ref{Sec_per_antenna} to numerically solve the secrecy
capacity optimization problem \eqref{eq_per_MIMOME_capacity_old}
subject to constraint \eqref{eq_per_power_cons}, where
$(n_t,n_r,n_e)=(2,2,1)$ and $\epsilon=10^{-4}$. The SNR is
$P_1+P_2$ in dB scale. The proposed algorithms converge in few
iterations. Note that the final converged values are very close
the secrecy rates computed using channel input covariance matrices
$\Sigma_{\bx}=(P_1,P_2)$. This fact implies that the secrecy rates
for $(n_t,n_r,n_e)=(2,2,1)$ plotted in Fig.
\ref{fig_MIMOME_vs_MISOSE} are very close to the secrecy
capacities under per-antenna power constraints
\eqref{eq_per_power_cons}.

Finally, in Fig. \ref{fig_total_P_vs_per_antenna_P}, we compare
$C^{\;t}_s$ and $C^p_s$, the MISOSE secrecy capacities under the
total power constraints \eqref{power_cons1} and per-antenna power
constraints \eqref{eq_per_power_cons} in Theorem
\ref{Theorem_No_CQI} and \ref{Theorem_Per}, respectively. Since
under total power constraints \eqref{power_cons1}, the
transmitters are free to allocate power between transmit antennas,
the corresponding secrecy capacities are higher than those under
per-antennas power constraints \eqref{eq_per_power_cons}
($C^{\;t}_s \geq C^p_s$ when $P=P_1+P_2$). We also observe that
when the ratio $P_1/P_2$ are larger, the secrecy capacities
$C^p_s$ become smaller when $(n_t, n_r, n_e)=(2,1,1)$. This is due
to that the multiple transmit antennas act more like a single
antenna when $P_1/P_2$ is larger. Similar observations can be
obtained from the secrecy rates under power constraints
\eqref{eq_per_power_cons} and $(n_t, n_r, n_e)=(2,2,1)$ in Fig.
\ref{fig_MIMOME_vs_MISOSE}.

\begin{figure} [t]
\centering \epsfig{file=./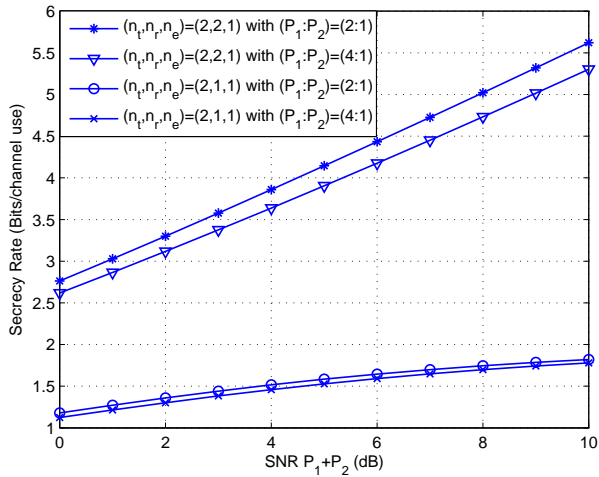,
width=0.5\textwidth} \caption{Under per-antenna power constraints
\eqref{eq_per_power_cons}, secrecy rates of wiretap channels with
$(n_t,n_r,n_e)=(2,2,1)$ versus secrecy capacities of wiretap
channels with $(n_t,n_r,n_e)=(2,1,1).$}
\label{fig_MIMOME_vs_MISOSE}
\end{figure}

\begin{figure} [t]
\centering \epsfig{file=./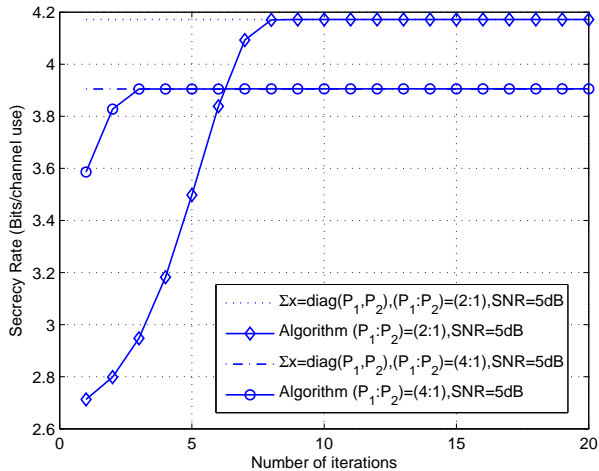,
width=0.5\textwidth} \caption{Convergence of the proposed
iterative algorithms for computing the secrecy capacities under
constraints \eqref{eq_per_power_cons}, where
$(n_t,n_r,n_e)=(2,2,1)$.} \label{fig_linesearch_interation_times}
\end{figure}

\begin{figure} [t]
\centering \epsfig{file=./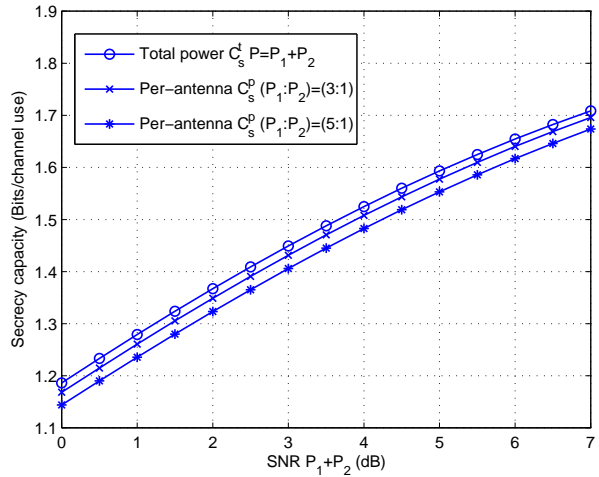,
width=0.5\textwidth} \caption{MISOSE secrecy capacities under the
total power constraints \eqref{power_cons1} and per-antennas power
constraints \eqref{eq_per_power_cons}, where
$(n_t,n_r,n_e)=(2,1,1)$.} \label{fig_total_P_vs_per_antenna_P}
\end{figure}

\section{Conclusion}\label{Sec_conclusion}
In this paper, under two different power constraints, the secrecy
capacities in fast fading Rayleigh MIMOME wiretap channels with
only the statistics of CSIT of both the legitimate and
eavesdropper channels were considered. When antennas of the
legitimate receiver were more than (or equal to) those of the
eavesdropper, under the total power constraint, we fully
characterized the MIMOME secrecy capacity. Under the per-antenna
power constraint, we also showed the secrecy capacity for the
MISOSE channel. These results are the first secrecy capacity
results for multiple antenna wiretap channels with partial CSIT.

\appendix

\subsection{Proof of Lemma \ref{lem_mono}} \label{app_mono}
We first focus on Property (I). First, let us consider the case
where $\Sigma^2_x \succeq \Sigma^1_x \succ \mathbf{0}$. We begin
our proof by transforming the objective function in
\eqref{eq_MISOSE_capacity} into the upcoming
\eqref{eq_total_alpha3}. Since $\bH'$ in \eqref{eq_H_divide} has
the same distribution as $\bH$, we rewrite the objective function
in \eqref{eq_MISOSE_capacity} as
\[
\E_{\bH}\!\left[\!\log\left|\bI+\bH'\Sigma_{\bx}(\bH')^\dag\!\right|\!\right]\!-\!\E_{\bG}\!\left[\!\log\!\left|\bI+\bG\Sigma_{\bx}\bG^\dag\!\right|\!\right].
\]
And by the matrix equality $|\bI+\bA\bB|=|\bI+\bB\bA|$, the above
equation equals to
\begin{equation} \label{eq_total_alpha1}
\E_{\bH'}\!\left[\!\log\left|\bI+\Sigma_{\bx}(\bH')^\dag\bH'\!\right|\!\right]\!-\!\E_{\bG}\!\left[\!\log\!\left|\bI+\Sigma_{\bx}\bG^\dag\bG\!\right|\!\right].
\end{equation}
For any $\Sigma_{\bx} \succ \mathbf{0}$, from \eqref{eq_H_divide},
\begin{align}
&\E_{\bH'}\!\left[\!\log\left|\bI+\Sigma_{\bx}(\bH')^\dag\bH'\!\right|\!\right]
\notag \\
=&\E_{\bH'}\!\left[\!\log\left|\bI+\Sigma_{\bx}
\left(\bH^\dag_{(n_r-n_e)}\bH_{(n_r-n_e)}+\frac{\sigma^2_h}{\sigma^2_g}\bG^\dag\bG\right)\!\right|\!\right].
\notag
\end{align}
Substituting the above equalities into \eqref{eq_total_alpha1}, it
then equals to
\begin{equation} \label{eq_total_alpha2}
\E_{\bH'}\!\left[\!\log\left|\bI+\left(\bI+\Sigma_{\bx}\bG^\dag\bG\!\right)^{-1}\Sigma_{\bx}\left((\tilde{\bH}')^\dag\tilde{\bH}'\right)\!\right|\!\right],
\end{equation}
where
\begin{equation} \label{eq_total_alpha_H}
(\tilde{\bH}')^\dag\tilde{\bH}'=\bH^\dag_{(n_r-n_e)}\bH_{(n_r-n_e)}+\left(\frac{\sigma^2_h}{\sigma^2_g}-1\right)\bG^\dag\bG.
\end{equation}
Note that since since $\sigma^2_h \geq \sigma^2_g$, the RHS of
\eqref{eq_total_alpha_H} is positive semi-definite. And thus we
can always find $\tilde{\bH}'$ to make \eqref{eq_total_alpha_H}
valid \cite{Horn_matrix_analysis}. Then we can rewrite
\eqref{eq_total_alpha2} as
\begin{equation} \label{eq_total_alpha3}
\E_{\bH'}\!\left[\!\log\left|\bI+\tilde{\bH}'\left(\Sigma_{\bx}^{-1}+\bG^\dag\bG\!\right)^{-1}(\tilde{\bH}')^\dag\!\right|\!\right].
\end{equation}

Now for any $\Sigma^2_x \succeq \Sigma^1_x \succ \mathbf{0}$, we
know that $ \left((\Sigma^2_x)^{-1}+\bG^\dag\bG\!\right)^{-1}
\succeq \left((\Sigma^1_x)^{-1}+\bG^\dag\bG\!\right)^{-1} \succ
\mathbf{0}, $ and thus
\begin{align}
&\tilde{\bH}'\left((\Sigma^2_x)^{-1}+\bG^\dag\bG\!\right)^{-1}(\tilde{\bH}')^\dag
\notag \\
\succeq &
\tilde{\bH}'\left((\Sigma^1_x)^{-1}+\bG^\dag\bG\!\right)^{-1}(\tilde{\bH}')^\dag
\overset{(a)}\succ \mathbf{0}. \label{eq_total_alpha4}
\end{align}
Note that the above relationship is valid for every realization of
the random channels, and from \cite{Horn_matrix_analysis}, we know
that
\begin{align} \notag
&\E_{\bH'}\!\left[\!\log\left|\bI+\tilde{\bH}'\left((\Sigma^2_x)^{-1}+\bG^\dag\bG\!\right)^{-1}(\tilde{\bH}')^\dag\!\right|\!\right]
\notag \\ \geq
&\E_{\bH'}\!\left[\!\log\left|\bI+\tilde{\bH}'\left((\Sigma^1_x)^{-1}+\bG^\dag\bG\!\right)^{-1}(\tilde{\bH}')^\dag\!\right|\!\right].
\notag
\end{align}
Thus \eqref{eq_mono_sigma_x} is valid if $\Sigma^2_x \succeq
\Sigma^1_x \succ \mathbf{0}$.

Now we only need to further prove that if $\Sigma^2_x \succeq
\Sigma^1_x \succeq \mathbf{0}$, when $\Sigma^1_x$ is singular,
$R_s(\Sigma^2_x) \geq R_s(\Sigma^1_x)$. Then our claim in Property
(I) is valid. Here we prove the case where both $\Sigma^1_x$ and
$\Sigma^2_x$ are singular, while the case where only $\Sigma^1_x$
is singular can be proved similarly. First, $\forall \alpha>0$, as
shown in the previous paragraph,
$R_s(\Sigma^2_x+\alpha\bI)-R_s(\Sigma^1_x+\alpha\bI) \geq 0$ when
$\Sigma^2_x \succeq \Sigma^1_x \succeq \mathbf{0}$ since
$\Sigma^2_x+\alpha\bI \succeq \Sigma^1_x+\alpha\bI \succ
\mathbf{0}$. Next, following methods in
\cite[P.3957]{NIT_Weingarten_CRegion_sIT04}, we have $\lim_{\alpha
\rightarrow 0} R_s(\Sigma^1_x+\alpha\bI)=R_s(\Sigma^1_x)$ with
$\alpha>0$. This fact comes from the continuity of the log det
functions in \eqref{eq_capacity_object} over positive definite
matrices \cite{NIT_Weingarten_CRegion_sIT04}. Then we know that
$R_s(\Sigma^2_x)-R_s(\Sigma^1_x)=\lim_{\alpha \rightarrow 0}
\left(R_s(\Sigma^2_x+\alpha\bI)-R_s(\Sigma^1_x+\alpha\bI)\right)
\geq 0$. This concludes our proof for Property (I). As for
Property (II), it can be easy obtained from (\ref{eq_total_alpha4}
a)

\subsection{Proof of Colloray \ref{coro_total_SNR}}
\label{App_coro_total_SNR} We only prove the case where $n_t \geq
n_r > n_e$, since the rest two cases can be proved similarly. From
Theorem \ref{Theorem_No_CQI}, we know that the MIMOME secrecy
capacity $C^{\;t}_s$ equals to
\[
C^{\;t}_s=\E_{\bH}\!\left[\!\log\left|\bI+\frac{P}{n_t}\bH\bH^\dag\!\right|\!\right]\!-\!\E_{\bG}\!\left[\!\log\!\left|\bI+\frac{P}{n_t}\bG\bG^\dag\!\right|\!\right]
\]
From \cite{Book_TulinoVerdu}, one can transform the previous
equation as
\[
C^{\;t}_s=n_r
\E\left[\log\left(1+\frac{\sigma^2_hP}{n_t}\lambda\right)\right]-n_e
\E\left[\log\left(1+\frac{\sigma^2_gP}{n_t}\lambda\right)\right],
\]
where $\lambda$ is an unordered eignvalue of a complex Wishart
matrix with $n_t$ degrees of freedom and covariance matrix $\bI$.
With the definition $P_g=\sigma^2_gP$,
\[
C^{\;t}_s=n_r
\E\left[\log\left(1+\frac{\sigma^2_h}{\sigma^2_g}\frac{P_g}{n_t}\lambda\right)\right]-n_e
\E\left[\log\left(1+\frac{P_g}{n_t}\lambda\right)\right].
\]
Note that $\sigma^2_h/\sigma^2_g \geq 1$, then
\[
\lim_{P_g \rightarrow
\infty}\frac{C^{\;t}_s-n_r\log(\sigma^2_h/\sigma^2_g)}{\log
P_g}=n_r-n_e,\] for fixed $n_t$.

\subsection{Proof of Proposition \ref{Pro_MIMO_per_antenna}}
\label{app_Pro_MIMO_per_antenna} Following
\cite{NIT_Weingarten_CRegion_sIT04}, we can show that the result
in Lemma \ref{Lem_No_CQI} is still valid under
\eqref{eq_per_power_cons}, with proof based on replacing
\eqref{eq_per_power_cons} with the covariance matrix constraint
$\E(\bx\bx^\dag) \preceq \bS$. To be more specific,
$C^p_s(P_1,\ldots,P_{n_t})=max_{\bS \in \mathds{S}} C_s(\bS)$,
where $C^p_s(P_1,\ldots,P_{n_t})$ is the secrecy capacity under
per-antenna power constraint \eqref{eq_per_power_cons}, $C_s(\bS)$
is the secrecy capacity under the covariance matrix constraint
$\E(\bx\bx^\dag) \preceq \bS$, and the set
$\mathds{S}=\{\bS|\{\bS\}_{ii} \leq P_i, i=1,\ldots, n_t, \bS
\succeq \b0 \}$. Next, we can show that Gaussian signal $\bx$
without prefixing $U \equiv \bx$ is still secrecy capacity
achieving with respect to $C_s(\bS)$. The proof is the same as
that of Lemma \ref{Lem_No_CQI}, expect for proving that under
covariance matrix constraint instead of \eqref{power_cons1},
Gaussian $\bx$ maximizes \eqref{eq_MIMOME_capacity_I}. To prove
this fact, from (\ref{eq_Convex} b), we know that the maximization
in \eqref{eq_MIMOME_capacity_I} equals to $\max_{\bx}
h(\frac{\sigma_{\bg}}{\sigma_{\bh}}\by'|\bz,\bH')$. Given a
realization of $\bH'=\underline{\bH}'$, we can show that Gaussian
$\bx$ will make $(\by',\bz)$ jointly Gaussian and maximize
$h(\frac{\sigma_{\bg}}{\sigma_{\bh}}\by'|\bz,\bH'=\underline{\bH}')$
for all $\bx$ satisfying $\E(\bx\bx^\dag) = \bS'$ and $\bS'
\preceq \bS$, by modifying the proof of \cite[Lemma
1]{Khisti_MIMOME} with \cite[Theorem 8.6.5]{Book_Cover}. Moreover,
since $\bx$ is independent of $\bH'$, we know that Gaussian $\bx$
will maximize $h(\frac{\sigma_{\bg}}{\sigma_{\bh}}\by'|\bz,\bH')$
and thus maximize \eqref{eq_MIMOME_capacity_I} subject to the
$\E(\bx\bx^\dag) \preceq \bS$ (thus also for the per-antenna power
constraint \eqref{eq_per_power_cons}).

Substitute the optimal Gaussian $\bx$ with covariance matrix
$\Sigma_{\bx}$ into $I(\bx;\by|\bH)-I(\bx;\by|\bG)$, we have the
secrecy capacity formulae as \eqref{eq_per_MIMOME_capacity_old}
subject to \eqref{eq_per_power_cons}. Next, by replacing channel
matrix $\bH$ in \eqref{eq_per_MIMOME_capacity_old} with the same
distribution one $\bH'$ in \eqref{eq_H_divide}, we have
\eqref{eq_per_MIMOME_capacity} according to the proof in Appendix
\ref{app_mono} (steps for reaching \eqref{eq_total_alpha2}).
Finally, following the first paragraph of Theorem
\ref{Theorem_No_CQI}'s proof, we know that the secrecy capacity
\eqref{eq_per_MIMOME_capacity} under \eqref{eq_per_power_cons} is
concave in $\Sigma_{\bx}$.

\subsection{Solution of the optimization problem \eqref{eq_per_MISOSE_Cs_optimizae} subject to the equality constraint
\eqref{eq_per_equality_cons}} \label{App_per_MISOSE_capcity}

In this appendix, we show that subject to constraint
\eqref{eq_per_equality_cons}, the optimization problem
\eqref{eq_per_MISOSE_Cs_optimizae} has optimal solution
\eqref{eq_per_optimal}. We only consider the case with $\beta<0$
because $\beta=0$ is a trivial case. The following proof is given
by mathematical induction, which shows that the off-diagonal terms
of $\Sigma_{\bx}$ subject to \eqref{eq_per_equality_cons} should
be zeros.

(i) Let us first consider the case with $n_t=2$. By defining
$f(x)\triangleq \log(1+x)$ to simplify the notations, the
objective function of \eqref{eq_per_MISOSE_Cs_optimizae} can be
written as
\begin{align} \label{eq_in_lemma1}
\E_\bg\!\!\left[\!\!f\!\!\left(\!
\frac{\beta}{\sigma\!+\!|g_1|^2\{\!\Sigma_{\bx}\!\}_{\!11}\!+\!|g_2|^2\{\!\Sigma_{\bx}\!\}_{\!22}\!+\!g_{1}^*\{\!\Sigma_{\bx}\!\}_{\!12}g_{2}\!+\!g_{2}^*\{\!\Sigma_{\bx}\!\}_{\!21}g_{1}}
\!\right)\!\!\right].
\end{align}
Notice that, since $g_1$ and $g_2$ are independent zero-mean
Gaussian, the expectation of \eqref{eq_in_lemma1} over $\bg=[g_1
\; g_2]^T$ would not change by replacing $g_1$ with $-g_1$.
Therefore, with $\{\Sigma_\bx\}_{11}= P_1$ and
$\{\Sigma_\bx\}_{22}= P_2$, the expectation in
\eqref{eq_in_lemma1} can be written as
\begin{align*}
&\E_{\bg}\left[f\left(
\frac{\beta}{\sigma\!+\!|g_1|^2P_1\!+\!|g_2|^2P_2\!+\!g_{1}^*\{\Sigma_\bx\}_{12}g_{2}\!+\!g_{2}^*\{\Sigma_\bx\}_{21}g_{1}}
\right)\right]\\
&=\frac{1}{2}\left\{\E_{\bg}\!\left[\!f\!\left(\!
\frac{\beta}{\sigma\!+\!|g_1|^2P_1\!+\!|g_2|^2P_2\!+\!2{\rm
Re}\{g_{1}^*\{\Sigma_\bx\}_{12}g_{2}\}}
\right)\right]\right.\\
&~~~~~~\left.\!+\!\E_{\bg}\!\left[\!f\!\left(\!
\frac{\beta}{\sigma\!+\!|g_1|^2P_1\!+\!|g_2|^2P_2\!-\!2{\rm
Re}\{g_{1}^*\{\Sigma_\bx\}_{12}g_{2}\}}
\!\right)\!\right]\!\right\}\\
&=\frac{1}{2}\E_{\bg}\!\!\left[\!f\!\left(\!
\frac{\beta\left[\beta+2\sigma\!+\!2\left(\!|g_1|^2P_1\!+\!|g_2|^2P_2\right)\right]}{\left(\sigma\!+\!|g_1|^2P_1\!+\!|g_2|^2P_2\right)^2\!-\!\left(2{\rm
Re}\{g_{1}^*\{\Sigma_\bx\}_{12}g_{2}\}\right)^2}
\!\right)\!\right]\!.
\end{align*}
In this case, the expectation is maximized by choosing
\[
\{\Sigma_\bx\}_{12}=\{\Sigma_\bx\}_{21}^*=0
\]
since $\beta<0$ and $\beta+2\sigma> 0$. Hence, subject to
\eqref{eq_per_equality_cons}, \eqref{eq_per_MISOSE_Cs_optimizae}
is maximized by choosing $\Sigma_\bx=\diag(P_1,P_2)$, for the case
with $n_t=2$.

(ii) Suppose that the statement holds for $n_t=k$, where $k\geq
2$. We need to show that it also holds for $n_t=k+1$.
Specifically, for $n_t=k+1$, with eavesdropper channel being
$\bg=[g_1,\ldots, g_{k+1}]^T$, the objective function
$\E_{\mathbf{g}}\!\left[\log\left(1+\frac{\beta}{\sigma+\mathbf{g}^\dag\Sigma_{\mathbf{x}}\mathbf{g}}\right)\right]$
in \eqref{eq_per_MISOSE_Cs_optimizae} can be written as
\begin{align}
\!\!& \E_{\bg}\!\!\left[\log\!\left(1+
\frac{\beta}{\sigma\!+\!\!\sum\limits_{i=1}^{k+1}\sum\limits_{j=1}^{k+1}\!
g_i^*\{\Sigma_{\mathbf{x}}\}_{ij}g_j}\!
\right)\!\right]\notag\\
&=\E_{\bg}\left[\log\left( 1+\frac{\beta}{\sigma_{k+1}\!+\!2{\rm
Re}\{\sum\limits_{i=1}^k
g_i^*\{\Sigma_{\mathbf{x}}\}_{i,k+1}g_{k+1}\}} \right)\right],
\label{eq_in_lemma2}
\end{align}
where
\[
\sigma_{k+1}\triangleq
\sigma+\tilde\bg^\dagger\tilde{\Sigma}_{\mathbf{x}}\tilde\bg+|g_{k+1}|^2P_{k+1},
\]
with $\tilde\bg=[g_1,\ldots, g_k]^T$ and
$\tilde{\Sigma}_{\mathbf{x}}$ being a $k\times k$ matrix with
$\{\tilde{\Sigma}_{\mathbf{x}}\}_{i,j}=\{\Sigma_{\mathbf{x}}\}_{i,j}$,
for $i,j=1,\ldots, k$. Then, by the fact that $-g_{k+1}$ has the
same distribution as $g_{k+1}$, the expectation in
\eqref{eq_in_lemma2} becomes
\begin{align*} \nonumber
\frac{1}{2}\E_{\bg}\left[\log \left( 1+
\frac{\beta\left(\beta\!+\!2\sigma+2(\tilde\bg^\dagger\tilde{\Sigma}_{\mathbf{x}}\tilde\bg+|g_{k+1}|^2P_{k+1})\right)}{(\sigma_{k+1})^2-(2{\rm
Re}\{\sum_{i=1}^k \!g_i^*\{\Sigma_\bx\}_{i,k+1}g_{k+1}\})^2}
\right)\right].
\end{align*}
Here, the expectation is maximized by choosing
\[
\{\Sigma_\bx\}_{i,k+1}=\{\Sigma_\bx\}_{k+1,i}^*=0
\]
for $i=1,\ldots, k$, due to the assumptions $\beta<0$ and
$\beta+2\sigma>0$. In this case, \eqref{eq_in_lemma2} becomes
\begin{align}
\E_{\bg}\left[\log \left(1+
\frac{\beta}{\sigma+|g_{k+1}|^2P_{k+1}+\tilde\bg^\dagger\tilde{\Sigma}_\bx\tilde\bg}
\right)\right]. \label{eq_in_lemma2b}
\end{align}
Then because $\sigma+|g_{k+1}|^2P_{k+1}\geq 0$,
$\beta+2(\sigma+|g_{k+1}|^2P_{k+1})
> 0$, and $\tilde{\Sigma}_\bx$ is positive semi-definite, it
follows from the inductive hypothesis that \eqref{eq_in_lemma2b}
is maximized by choosing $\tilde{\Sigma}_\bx=\diag(P_1,\ldots,
P_k)$. This concludes that subject to the equality constraint
\eqref{eq_per_equality_cons}, the optimal solution maximizing the
objective function of \eqref{eq_per_MISOSE_Cs_optimizae} is
\eqref{eq_per_optimal}.

\bibliographystyle{IEEEtran}
\bibliography{IEEEabrv,Book,SecrecyPs3}
\end{document}